\documentclass[conference]{IEEEtran}
\IEEEoverridecommandlockouts
\IEEEpubidadjcol
\usepackage[left=0.625in,right=0.655in,top=0.705in,bottom=1.03in]{geometry}
\usepackage[utf8]{inputenc} 
\usepackage[T1]{fontenc}
\usepackage{url}
\usepackage{ifthen}
\newboolean{isarxiv}
\setboolean{isarxiv}{false}
\newcommand{\arxivonly}[1]{\ifthenelse{\boolean{isarxiv}}{#1}{}}
\newcommand{\confonly}[1]{\ifthenelse{\boolean{isarxiv}}{}{#1}}
\usepackage{cite}
\usepackage[cmex10]{amsmath}
\usepackage{siunitx-v2}
\usepackage{amsfonts}
\usepackage{amssymb}
\usepackage{stfloats} 
\usepackage{upgreek}

\usepackage{amsthm}
\usepackage{bm}
\usepackage{algorithmic}
\usepackage{graphicx}
\usepackage{textcomp}
\usepackage{xcolor}
\definecolor{airforceblue}{rgb}{0.36,
0.54, 0.66}
\definecolor{amber}{rgb}{1.0, 0.75,
0.0}
\definecolor{amethyst}{rgb}{0.6, 0.4,
0.8}
\definecolor{applegreen}{rgb}{0.55,
0.71, 0.0}
\definecolor{antiquebrass}{rgb}{0.8,
0.58, 0.46}
\usepackage{enumitem}
\usepackage{mathtools}
\PassOptionsToPackage{hyphens}{url}
\usepackage{tikz,pgfplots}
\usetikzlibrary{spy}
\usetikzlibrary{positioning}
\usetikzlibrary{decorations.pathreplacing}
\usetikzlibrary {arrows.meta,bending,positioning}
\pgfdeclareradialshading{sphere}{\pgfpoint{0cm}{0cm}}%
{rgb(0cm)=(1,1,1);
rgb(0.7cm)=(0.7,0.1,0); rgb(1cm)=(0.5,0.05,0); rgb(1.05cm)=(1,1,1)}
\usepackage{grffile}
\pgfplotsset{compat=newest}
\usepgfplotslibrary{patchplots}
\usetikzlibrary{backgrounds,plotmarks,fit,positioning,arrows,shapes,shapes.multipart,calc,arrows.meta}
\usetikzlibrary{fit,positioning,arrows,shapes,shapes.multipart,calc,arrows.meta,shapes.geometric,shapes.misc}
\usetikzlibrary{decorations.pathreplacing,angles,quotes,calligraphy}
\usetikzlibrary{automata}

\allowdisplaybreaks

\usepackage[caption=false,font=normalsize,labelfont=sf,textfont=sf]{subfig}

\usepackage{vmr-symbols-vecbold}
\usepackage{standard-macros}

\usepackage{soul} 
\sethlcolor{yellow} 

\newtheorem{definition}{Definition}
\newtheorem{lemma}{Lemma}
\newtheorem{remark}{Remark}
\newtheorem{claim}{Claim}

\renewcommand{\define}{=}
\newcommand{\n}{{\uN}} 
\newcommand{\Ka}{{\uK_{\rm a}}} 
\newcommand{\Kaf}{{\uK_{\rm a1} }} 
\newcommand{\Kas}{{\uK_{\rm a2}}}

\newcommand{\B}{{\uB}}
\newcommand{\power}{\uP} 
\newcommand{\M}{\uM} 

\newcommand{\Ber}[1]{{\rm Ber}\lefto(#1\right)}

\newcommand{\msC}{\mathscr{C}}\newcommand{\of}[1]{^{(#1)}}

\newtheorem{theorem}{Theorem}
\newcommand{\zv}{\veczero}

\newcommand{\rvY}{\rvecy}

\def\BibTeX{{\rm B\kern-.05em{\sc i\kern-.025em b}\kern-.08em
    T\kern-.1667em\lower.7ex\hbox{E}\kern-.125emX}}

\title{Minimum Energy per Bit of Unsourced Multiple Access with Location-Based Codebook Partitioning
\vspace{-.3cm}}

\author{
\IEEEauthorblockN{Deekshith Pathayappilly Krishnan$^{*}$, Kaan Okumus$^{*}$, Khac-Hoang Ngo$^{\dagger}$, and Giuseppe Durisi$^{*}$
}
\IEEEauthorblockA{
$^{*}$Department of Electrical Engineering, Chalmers University of Technology, 41296 Gothenburg, Sweden\\ 
$^{\dagger}$Department of Electrical Engineering, Link\"oping University, 58183 Link\"oping, Sweden\\ Email: \{deepat, okumus\}@chalmers.se, khac-hoang.ngo@liu.se, durisi@chalmers.se\\
\vspace{-2em}
}
\thanks{This work was supported in part by the Swedish Research Council under
grants 2021-04970 and 2022-04471, and by the Swedish Foundation for
 Strategic Research. The  work of K.-H. Ngo was supported in part by the Excellence Center at Linköping--Lund in Information Technology (ELLIIT).}
}%

\begin{document}

\maketitle

\begin{abstract}
   We derive finite-blocklength bounds on the minimum achievable energy per bit over a Gaussian unsourced multiple access (UMA) channel in the presence of heterogeneous path-loss conditions. 
   We consider a setting in which the path loss is known to the users, which enables the use of location-based codebook partitioning [\c{C}akmak et al., 2025].
   Through numerical simulations and a large-system analysis based on the replica method, we quantify the performance gain of this strategy relative to the conventional UMA approach in which all users employ a common codebook.
\end{abstract}

\section{Introduction}
Massive machine-type communication (mMTC) is a rapidly growing use case in wireless networks
due to the fast expansion of Internet of Things (IoT) systems. {mMTC} is characterized by sporadic, uplink-centered
transmissions of short packets from a high density of devices often operating under stringent
energy-efficiency requirements~\cite{kalor2024wireless}.
As first formalized in~\cite{PolyanskiyISIT2017massive_random_access}, the {mMTC} problem has
unique features, which make it different from the traditional multiple access channel (MAC). One key difference is that, to capture the massiveness of the user population, it is fundamental to assume that
it is impossible for the system to assign to each user a different codebook.
The resulting scenario, in which all users are equipped
with the same codebook, is commonly referred to in the literature as unsourced multiple access (UMA).

Finite-blocklength bounds on the minimum energy per bit for the Gaussian {UMA} scenario were first obtained
in~\cite{PolyanskiyISIT2017massive_random_access}, under the assumption
that all users are received at the same power. These bounds have been improved and generalized to different scenarios, including settings where the number of active users is random and unknown to the receiver~\cite{Ngo2023_randomActivity}, as well as quasi-static fading channels~\cite{Kowshik2020}. See~\cite{andreev25-a} for a recent review of the area.
Furthermore, coding schemes approaching these bounds have been designed.
A particularly promising
approach~\cite{fengler2019sparcs,Amalladinne2018coupled_compressed_sensing,Amalladinne2020,Amalladinne2022} leverages the similarity of {UMA} decoding and sparse signal recovery and relies on approximate message passing (AMP) for message recovery.
This strategy, however, incurs a challenge when applied to realistic channel models in which signals
from different users are received at different average-power levels because of path-loss effects.
Indeed, since in {UMA} it is, in general, not possible to establish any association between the
position of a user (and, hence, its path loss) and the codeword the user will transmit, one is forced
to use at the {AMP} denoiser a \emph{diffuse} prior, which involves an averaging over all possible user positions.

This problem has been recently sidestepped in~\cite{Cakmak2025} via the insightful observation that, in
most cellular wireless networks, an estimate of the path loss between users and base-stations can be obtained
at the users via the downlink control information transmitted by the cellular base stations.
This information can be used in a {UMA} system as follows: in the system-design phase, one quantizes the possible
path loss values using a pre-determined number $Q$ of levels; then, one partitions the codebook in $Q$ subcodebooks.
In the operational phase, each user estimates its path loss, quantizes it, and uses the corresponding subcodebook to transmit its message.
This strategy, referred to in~\cite{Cakmak2025} as \emph{location-based codebook partitioning}, allows one to
establish a link between transmitted codewords and path loss, and results in a much more concentrated
prior for the {AMP} denoiser.
As shown in~\cite{Cakmak2025} both theoretically and experimentally, this yields better
{AMP} decoder performance.

\paragraph*{Contributions}
The purpose of this paper is to assess the effectiveness of location-based codebook partitioning via
finite-blocklength information-theoretic bounds similar to the ones developed
in~\cite{PolyanskiyISIT2017massive_random_access}.
For simplicity, we restrict our attention to a toy-model version of the scenario considered
in~\cite{Cakmak2025}.
Specifically, we assume that the channel between each user and a single-antenna receiver is modeled
as a nonfading Gaussian channel, with (deterministic) path loss that can take only two different values,
which we denote by $g_{1}$ and $g_{2}$, respectively.

For this setup, we derive finite-blocklength bounds on the minimum energy per bit achievable when
the codebook is partitioned into two subcodebooks and each user selects the subcodebook corresponding
to its path loss. Furthermore, we compare this achievability bound with those obtained when all users select codewords from the same codebook, and when the
decoder employs successive interference cancellation.
Through numerical results, we show that the bound on the minimum energy per bit achieved via
location-based codebook partitioning lies below the two alternative bounds.
We also present simulation results obtained by combining the coded
compressive sensing (CCS) scheme proposed in~\cite{Amalladinne2020}
with the multisource-{AMP} decoder introduced in~\cite{Cakmak2025}. 
These results exhibit a similar performance ordering. 

To provide insights into the benefits of location-based codebook partitioning, we
finally present a large-system characterization of the per-user probability of error (PUPE) via replica
analysis~\cite{edwards1975theory}, in the regime where both the number of users and
the blocklength grow to infinity with a fixed ratio.
This (non-rigorous) large-system analysis reveals that, with location-based codebook
partitioning, the {UMA} large-system performance can be characterized by analyzing two
equivalent scalar Gaussian channels with SNR proportional to $g_{1}^{2}$ and $g_{2}^{2}$,
respectively.
On the contrary, when a single codebook is used, the relevant scalar channel is a fading
channel, with instantaneous fading value (not known to the receiver), taking value in $\{g_{1},g_{2}\}$
with probability depending on the asymptotic fraction of users experiencing path loss $g_{1}$ and
$g_{2}$, respectively.

\subsubsection*{Notation}
We denote system parameters by uppercase non-italic
letters, e.g.,~$\M$. Uppercase italic letters, e.g.,~$X$, denote scalar random variables and their realizations are in lowercase, e.g.,~$x$. Vectors are denoted likewise in boldface, e.g., a random vector~$\rvecx$ and its realization~$\vecx$. We denote their $i$th entries as~$[\rvecx]_i$ and $[\vecx]_i$.  We use a script font for random matrices, e.g.,~$\msC$, and a sans-serif font for deterministic matrices, e.g.,~$\matC$. 
We denote the $n \times n$ identity matrix by~$\matidentity_n$, and the all-zero vector by $\veczero$.
The superscript~$\tp{}$ 
stands for transposition. 
We denote sets with calligraphic letters, e.g., $\setS$, the set $\set{1,\dots,n}$ by $[n]$, and the set of all size-$k$ subsets of $\setA$ by $\binom{\setA}{k}$. 
 We denote the real-valued Gaussian vector distribution with mean $\vecmu$ and covariance matrix $\matA$ by  $\normal(\vecmu, \matA)$, and the Bernoulli distribution with parameter~$p$ by $\Ber{p}$. We denote convergence in distribution by $\stackrel{D}{\rightarrow}$. Finally, $x^+ = \max\{0,x\}$; 
$\ind{\cdot}$ is the indicator function; $\diag(x_1, \dots, x_n)$ denotes the diagonal matrix with $(x_1, \dots, x_n)$ as the diagonal; $\ln$ is the natural logarithm. 

\section{System Model} \label{sec:model}
We consider a stationary memoryless Gaussian {UMA} scenario, in which $\Ka$ users transmit their
messages to a receiver over $\uN$ channel uses. The users are assumed to be clustered according to their
path loss. Specifically, users within the same cluster~$\ell$ experience the same path
loss~$g_\ell$. For simplicity, we consider the case of $2$ clusters. Our results can however be
readily generalized to an arbitrary finite number of clusters.
We focus on the impact of heterogeneous path loss, and do not model small-scale fading. 
Let $\rvecx_{\ell,k} \in \reals^\uN$ be the signal transmitted by the $k$th user in cluster~$\ell$. The received signal is 
\begin{equation}\label{eq:io-relation}
    \rvecy = g_1 \sum_{k=1}^{\Kaf} \rvecx_{1,k} + g_2 \sum_{k=1}^{\Kas} \rvecx_{2,k} + \rvecz
\end{equation}
where $\Kaf$ and $\Kas$, with $\Kaf + \Kas = \Ka$, are the number of active users that belong to
clusters $1$ and $2$, respectively, and $\rvecz \sim \normal(\veczero, \matidentity_n)$ is the
Gaussian noise, which is independent of the transmitted signals. We consider the power constraint
$\|\rvecx_{\ell,k}\|^2 \le \n\power$ for all $\ell$ and $k$. We also assume that $\Ka, \Kaf$, and
$\Kas$ are fixed and known to the receiver. 
Finally, we assume that each user has perfect knowledge of its path loss.
As already pointed out, this information can be obtained in practical systems via the downlink
control information transmitted by cellular base stations.

For this channel, an $(\M,\n,\power,\epsilon)$ {UMA} code with codebook size $\M$ and codeword length $\n$ consists of 
\begin{itemize}
    \item two encoding functions $f_{\ell} : [\M] \to \reals^{\n}$, $\ell\in\{1,2\}$, that produce the
      transmitted codeword $\rvecx_{\ell,k} = f_\ell(W_k)$, satisfying the power constraint, of
      user~$k$ in cluster~$\ell$, for a user message $W_k$ uniformly distributed over $[\M]$;
    
    \item a decoding function $g: \reals^{\n} \to \binom{[\M]}{\Ka}$ that provides an estimate
      $\widehat{\setW} = \{\widehat W_1, \dots, \widehat W_{|\widehat \setW|}\} = g(\rvecy)$ of the
      list of transmitted messages.
\end{itemize}
The decoding function satisfies the following constraint on the {PUPE}: 
\begin{equation}
     \frac{1}{|\widehat \setW|} \sum_{i = 1}^{|\widehat \setW|} \Pr[\widetilde W_i \notin \widehat \setW] \le \epsilon, \label{eq:def_Pe}
\end{equation}
where $\widetilde \setW = \{\widetilde W_1, \dots, \widetilde W_{|\widetilde \setW|}\}$ denotes the set of distinct elements of $\setW = \{W_1, \dots, W_{\Ka}\}$. In~\eqref{eq:def_Pe}, we use the convention $0/0 = 0$ to circumvent the case $|\widehat \setW| = 0$. 

The main difference between the definition of {UMA} coding scheme just provided and the one
originally provided in~\cite{PolyanskiyISIT2017massive_random_access} is that we allow the encoder
to depend on the cluster. This allows us to model location-based codebook partitioning.

\section{Random Coding Bound}
\label{sec:bound}
In this section, we first derive a random-coding achievability bound on the {PUPE}
achievable over the channel~\eqref{eq:io-relation} that exploits location-based codebook
partitioning.
Then, we provide in Section~\ref{sec:common_codebook} two bounds for the case in which
$f_{1}=f_{2}$, which we refer to as common codebook case.
The first bound relies on joint decoding, whereas the second bound relies on interference
cancellation. 

\subsection{Location-Based Codebook Partitioning} \label{sec:separateCodebooks_jointDec}
We assume that the codewords of each of the two codebooks $\setC_{\ell} = \{\rvecc_{\ell,m}\}_{m=1}^\M$, $\ell \in \{1,2\}$
are drawn independently (across both $\ell$ and $m$) from a $\normal(\veczero, \power' \matidentity_{\n})$ distribution, 
for a fixed $\power' < \power$.
To convey message $W_{k}$, an
active user $k$ in cluster~$\ell$ transmits $\rvecc_{\ell,W_k}$, provided that $\|\rvecc_{\ell,W_k}\|^2
\le \n \power$. Otherwise, the user transmits the all-zero codeword. That is, 
\begin{align}
    f_{\ell}(W_k) = \rvecc_{\ell,W_k} \ind{\|\rvecc_{\ell,W_k}\|^2 \le \n \power}.    
\end{align}
We consider a maximum likelihood decoder, whose output is the set of estimated messages 
$\widehat \setW = \widehat \setW_1 \cup \widehat \setW_2$, where 
\begin{equation} \label{eq:jointDec}
    (\widehat \setW_1, \widehat \setW_2) = \!\!\argmin_{\substack{\setW'_1, \setW'_2 \subset [\M] \colon  \setW'_1 \cap \setW'_2 = \emptyset, \\ |\setW'_1| = \Kaf, |\setW'_2| = \Kas}} \!\!\!\!\!\!\!\!\|\rvecy - g_1 c_1(\setW'_1) - g_2 c_2(\setW'_2)\|
\end{equation}
with $c_\ell(\setW') = \sum_{w\in\setW'} \rvecc_{\ell, w}$, $\ell \in \{1,2\}$.

An error analysis of this random-coding scheme yields the following joint-decoding achievability bound. 

\begin{theorem}[Random-coding bound, location-based codebook partitioning, joint decoding] \label{th:bound}
  Fix $\power' < \power$. For the Gaussian {UMA} channel~\eqref{eq:io-relation}  there exists an $(\M, \n, \power, \epsilon)$ random-access code for which
    \begin{align} \label{eq:bound}
        \epsilon \le \sum_{t= 0}^{\Ka} \frac{t}{\Ka} p_t + p_0
    \end{align}
    where 
    \begin{align}
        p_0 &= 1 - \frac{\M!}{\M^{\Ka}(\M-\Ka)!} + \Ka\frac{\Gamma(\frac{\n}{2},\frac{\n P}{2\power'})}{\Gamma(\n/2)}, \label{eq:p0}
        \end{align}
        \begin{align}
        p_t &= \sum_{t\sub{MD1} \in \setT_t} \sum_{t\sub{FP1} \in \setT_t} \sum_{t\sub{AC1} \in \setT'_{t,t\sub{MD1},t\sub{FP1}}} \notag\\
        &\quad \exp(-\n E(t, t\sub{MD1}, t\sub{FP1}, t\sub{AC1})), \label{eq:pt} \\
        E(t,t\sub{MD1},~&t\sub{FP1}, t\sub{AC1}) \notag \\\quad &= \max_{\rho_1, \rho_2, \rho_3 \in [0,1]} - \rho_1\rho_2\rho_3 (R_1 + R_2) \notag \\
        &\quad - \rho_2 \rho_3 (R_3 + R_4) - \rho_3 (R_5 + R_6) \notag \\
        &\quad  + E_0(\rho_1,\rho_2,\rho_3), \label{eq:Et} \\
        R_1 &= \frac{1}{\n} \ln \binom{\Kaf - t\sub{MD1}}{t\sub{AC1}}, \label{eq:R1} \\
        R_2 &= \frac{1}{\n} \ln \binom{\Kas - t + t\sub{MD1}}{t\sub{AC1} + t\sub{MD1} - t\sub{FP1}}, \label{eq:R2} \\
        R_3 &= \frac{1}{\n} \ln \binom{\M-\Ka}{t\sub{FP1}}, \label{eq:R3} \\
        R_4 &= \frac{1}{\n} \ln \binom{\M-\Ka -t\sub{FP1}}{t-t\sub{FP1}}, \label{eq:R4} \\
        R_5 &= \frac{1}{\n} \ln \binom{\Kaf}{t\sub{MD1}}, \label{eq:R5} \\
        R_6 &= \frac{1}{\n} \ln \binom{\Kas}{t - t\sub{MD1}}, \label{eq:R6} \\
        \!E_0(\rho_1,\rho_2,\rho_3) &=  \max_\lambda  \rho_3 \rho_2 a_1+ \rho_3 a_2 + \frac{1}{2}\ln(1\!-\!2\rho_3b), \label{eq:E0} \\
        a_1&=\frac{\rho_1}{2}
        \ln(1 \!+\! 2\lambda \kappa\sub{AC} \power') \!+\! \frac{1}{2}\ln(1 \!+\! 2\mu_1 \kappa\sub{FP} \power'),\! \label{eq:a_1}\\
        a_2 &=\frac{1}{2}\ln(1+ 2\mu_2 \kappa\sub{MD}\power'), \label{eq:a2}\\
    b &=\lambda \rho_1 \rho_2-\frac{\mu_2}{1+ 2\mu_2 \kappa\sub{MD} \power'}, \label{eq:b}\\
    \mu_1 &= \frac{\lambda \rho_1}{1+ 2\lambda \kappa\sub{AC} \power'},\label{eq:mu1}\\
    \mu_2 & = \frac{\mu_1 \rho_2}{1+2\mu_1 \kappa\sub{FP} \power'},\label{eq:mu2}\\
        \kappa\sub{AC} &= (g_1^2 + g_2^2)(2t\sub{AC1} + t\sub{MD1} \!-\! t\sub{FP1}), \label{eq:kappaAC_joint} \\
        \kappa\sub{MD} &= g_1^2 t\sub{MD1} + g_2^2 (t - t\sub{MD1}), \label{eq:kappaMD} \\
        \kappa\sub{FP} &= g_1^2 t\sub{FP1} + g_2^2 (t - t\sub{FP1}), \label{eq:kappaFP} \\
        \setT_t &= [(t-\Kas)^+ : \min\{t, \Kaf\}], \label{eq:Tt}\\
        \setT'_{t,t\sub{MD1},t\sub{FP1}} &= [\max\{0, t\sub{FP1} - t\sub{MD1}\} : \notag \\
    &\quad~~ \min\{\Kaf - t\sub{MD1}, \Kas - t + t\sub{FP1}\}].\! \label{eq:Ttt}
    \end{align}
\end{theorem}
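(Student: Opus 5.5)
We follow the random-coding argument of Polyanskiy's UMA bound, extended to two codebooks and joint decoding over the pair $(\setW'_1,\setW'_2)$.

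\emph{Step 1: change of measure.} We first replace the true scheme by an idealized one in which all $\Ka$ users pick distinct messages and every user sends its Gaussian codeword $\rvecc_{\ell,W_k}$, regardless of whether the power constraint holds. The total-variation distance between the two measures is at most the probability that two messages collide plus the probability that some codeword violates the power constraint. The collision probability equals $1-\M!/(\M^{\Ka}(\M-\Ka)!)$. By a union bound over users, the power-violation probability is at most $\Ka\,\Gamma(\n/2,\n\power/(2\power'))/\Gamma(\n/2)$, since $\|\rvecc\|^2/\power'$ is chi-squared with $\n$ degrees of freedom. Together these give $p_0$. Under the idealized measure, the PUPE is at most $\sum_t (t/\Ka)\Pr[\text{exactly } t \text{ messages missed}]$.

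\emph{Step 2: parametrize the error events.} Suppose the decoder misses $t$ messages. We write $t\sub{MD1}$ for the missed cluster-1 messages, so $t-t\sub{MD1}$ cluster-2 messages are missed. The decoder outputs $t$ wrong ones, split into $t\sub{FP1}$ and $t-t\sub{FP1}$ across the two subcodebooks.

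A subtlety arises because the subcodebooks are indexed by a common message set $[\M]$ and $\widehat\setW=\widehat\setW_1\cup\widehat\setW_2$. A false positive is therefore either a fresh index from $[\M]\setminus\setW$, which gives the factors $R_3,R_4$, or a true message attributed to the wrong cluster. The latter is the ``across-cluster'' swap counted by $t\sub{AC1}$, giving the binomials $R_1,R_2$; its energy enters through $\kappa\sub{AC}$, which carries $g_1^2+g_2^2$. The number of ways to choose which true messages are missed gives $R_5,R_6$.

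The index ranges $\setT_t$ and $\setT'_{t,t\sub{MD1},t\sub{FP1}}$ come from requiring every binomial coefficient to be well defined. We then apply the union bound over $(t,t\sub{MD1},t\sub{FP1},t\sub{AC1})$, which yields the outer sums in $p_t$.

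\emph{Step 3: exponential bound for a fixed configuration.} The error event is that some alternative pair has squared distance to $\rvecy$ no larger than the true pair. We bound it with Gallager's $\rho$-trick, applied in three nested layers ($\rho_1$ inner, then $\rho_2$, then $\rho_3$), each corresponding to a union over one group of choices. In the nesting order, these are the across-cluster reassignment ($R_1,R_2$), then the fresh false codewords ($R_3,R_4$), then the missed set ($R_5,R_6$). This explains the rate weights $\rho_1\rho_2\rho_3$, $\rho_2\rho_3$ and $\rho_3$.

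Inside the layers we use the Chernoff bound $\ind{\|\rvecy-\vecc'\|^2\le\|\rvecy-\vecc\|^2}\le \exp(\lambda(\|\rvecy-\vecc\|^2-\|\rvecy-\vecc'\|^2))$. The remaining expectations are Gaussian moment generating functions, which we evaluate successively:
\begin{itemize}
\item first over the across-cluster codewords (variance $\kappa\sub{AC}\power'$), producing $\mu_1$;
\item then over the fresh false codewords (variance $\kappa\sub{FP}\power'$), producing $\mu_2$;
\item then over the missed codewords (variance $\kappa\sub{MD}\power'$);
\item finally over the noise, producing $\tfrac12\ln(1-2\rho_3 b)$.
\end{itemize}
Each Gaussian integral has the form $\mathbb{E}[e^{-\mu\|\veca+\rvecc\|^2}]=(1+2\mu\sigma^2)^{-\n/2}e^{-\mu\|\veca\|^2/(1+2\mu\sigma^2)}$, which is what generates the recursive definitions~\eqref{eq:mu1}–\eqref{eq:b}. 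Optimizing over $\lambda$ and over $\rho_1,\rho_2,\rho_3\in[0,1]$ then gives $E$.

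\emph{Main obstacle.} The hardest step is Step 3. One must order the nested Jensen/$\rho$ steps so that each Gaussian average involves only independent codewords. Care is needed because across-cluster swaps involve the same index $w$ in both $\setC_1$ and $\setC_2$, with independent codewords $\rvecc_{1,w},\rvecc_{2,w}$ of combined effective variance $(g_1^2+g_2^2)\power'$. One must also track the counts in $\kappa\sub{AC}$ exactly. I would first check the bookkeeping against the single-cluster case $\Kas=0$, where the bound should reduce to Polyanskiy's.
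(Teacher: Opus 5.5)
Your proposal follows the paper's proof: the same change of measure giving $p_0$, the same split into missed, fresh false-positive, and ``accidentally correct'' cross-cluster sets, and the same Chernoff bound followed by three nested Gallager $\rho$-tricks over the AC, FP, and MD unions, in that order. The Gaussian moment generating functions are evaluated in the same sequence (AC codewords, FP codewords, MD codewords, then the noise), which produces $\mu_1$, $\mu_2$, and $b$ exactly as in the paper; one phrasing fix is that the cross-cluster messages counted by $R_1,R_2$ are detected messages rather than false positives, which is already how your accounting treats them.
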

\begin{proof}
    The proof follows similar steps as the proof of~\cite[Th.~1]{PolyanskiyISIT2017massive_random_access} and~\cite[Th.~1]{Ngo2023_randomActivity} namely, a change of measure and a Gallager-type error exponent analysis that makes use of Chernoff bound, Gallager's $\rho$-trick, and Gaussian statistics. 
    One fundamental difference compared to~\cite{PolyanskiyISIT2017massive_random_access} is that the decoder, when analyzing codewords received at power $\power g^2_2$, may put out a set of messages, which we denote by $\setW\sub{AC1}$, that are false positives from the perspective of cluster $2$, but happen to coincide with
    the messages from cluster~$1$, and vice versa. 
    We refer to these messages as ``accidentally correct''~(AC) messages. 
    Compared to~\cite{PolyanskiyISIT2017massive_random_access}, this results in an additional union bound over $t\sub{AC1} = |\setW\sub{AC1}|$ (see~\eqref{eq:pt}). 
    See \arxivonly{Appendix~\ref{proof:bound}} \confonly{\cite[App.~B]{Krishnan2026location}} for details.
\end{proof}

\subsection{Common Codebook}\label{sec:common_codebook}

We now consider the common codebook case $\setC_1 = \setC_2 = \{\rvecc_1, \dots, \rvecc_\M\}$. 

\subsubsection{Joint Decoding}
We have the following result. 

\begin{cor}[Random-coding bound, common codebook, joint decoding] \label{cor:bound_separate}
  Fix $\power' < \power$. For the Gaussian {UMA} channel~\eqref{eq:io-relation}, 
    there exists an $(\M, \n, \power,  \epsilon)$ random-access code for which $\epsilon$ is bounded as in~\eqref{eq:bound} with~\eqref{eq:kappaAC_joint} replaced by
    \begin{equation}
        \kappa\sub{AC} = (g_1- g_2)^2 (2t\sub{AC1} + t\sub{MD1} - t\sub{FP1}). \label{eq:kappaAC_separate}
    \end{equation}
\end{cor}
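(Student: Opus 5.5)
The plan is to repeat the proof of Theorem~\ref{th:bound} line by line and to locate the single place where the independence of $\setC_1$ and $\setC_2$ is used. That place is the statistics of the "accidentally correct" codewords. Everything else carries over unchanged:
\begin{itemize}
\item the change of measure to unconstrained Gaussian codewords;
\item the term $p_0$, which covers collisions among $W_1,\dots,W_{\Ka}$ drawn from the common $[\M]$ and power-constraint violations;
\item the parametrization of error events by $(t,t\sub{MD1},t\sub{FP1},t\sub{AC1})$;
\item the union bounds whose sizes give $R_1,\dots,R_6$. These depend only on message indices and on the disjointness constraint $\setW'_1\cap\setW'_2=\emptyset$ in~\eqref{eq:jointDec}, not on which codebook is used.
\end{itemize}

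Next I would write the decoding metric difference for a fixed error event in the common-codebook case. The hypothesized output is $g_1c(\setW'_1)+g_2c(\setW'_2)$, and we compare it with the true signal $g_1c(\setW_1)+g_2c(\setW_2)$. The residual splits into three groups.
\begin{itemize}
\item \emph{Missed detections.} These are messages in $\setW_\ell\setminus(\setW'_1\cup\setW'_2)$, contributing $g_\ell\rvecc_w$. They give total variance $\kappa\sub{MD}\power'$, exactly as before.
\item \emph{False positives.} These are messages outside $\setW$, contributing $-g_\ell\rvecc_w$. They give variance $\kappa\sub{FP}\power'$, also as before.
\item \emph{Accidentally correct messages.} There are $t\sub{AC1}$ messages of cluster~$1$ declared in $\setW'_2$, and $t\sub{AC1}+t\sub{MD1}-t\sub{FP1}$ messages of cluster~$2$ declared in $\setW'_1$; this count follows from $|\setW'_1|=\Kaf$ and $|\setW'_2|=\Kas$.
\end{itemize}

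The accidentally correct messages are where the two settings differ. With partitioned codebooks, each such message contributes $g_1\rvecc_{1,w}-g_2\rvecc_{2,w}$. This is a sum of two independent Gaussians with variance $(g_1^2+g_2^2)\power'$, which is the source of~\eqref{eq:kappaAC_joint}. With a common codebook, the same vector $\rvecc_w$ appears on both sides, so the contribution is $\pm(g_1-g_2)\rvecc_w$, with variance $(g_1-g_2)^2\power'$. Summing over the $2t\sub{AC1}+t\sub{MD1}-t\sub{FP1}$ accidentally correct messages gives~\eqref{eq:kappaAC_separate}.

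I then have to check that the three groups still involve pairwise distinct message indices. This holds because a message index cannot be simultaneously missed, falsely detected, and accidentally correct. Hence the three Gaussian vectors are still mutually independent and independent of the noise. The Chernoff step (parameter $\lambda$) and the nested Gallager $\rho$-tricks (first over $\setW\sub{AC1}$ with $\rho_1$, then over false positives with $\rho_2$, then over missed detections with $\rho_3$) then go through verbatim. The Gaussian moment-generating-function computations yield the same closed forms~\eqref{eq:E0}--\eqref{eq:mu2}, with only the constant $\kappa\sub{AC}$ changed.

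The main obstacle is the bookkeeping in this independence check: confirming that, with a shared codebook, no codeword of the transmitted set enters the residual in two different groups. A second point is that the FP codewords, drawn from indices in $[\M]\setminus\setW$, remain independent of all transmitted codewords after the change of measure. I would first verify this on the event where $W_1,\dots,W_{\Ka}$ are distinct, since collisions are already charged to $p_0$. Then I would note that the equal-codeword structure affects only the accidentally correct terms, whose combined coefficient is $g_1-g_2$ rather than two independent coefficients.
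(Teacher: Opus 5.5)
Your proposal is correct and follows the paper's proof: you rerun Appendix~B verbatim, and the only change is in the Gaussian-MGF step over the accidentally-correct codewords. Their combined contribution becomes $(g_1-g_2)\bigl(c(\setW\sub{AC1})-c(\setW\sub{AC2})\bigr)$, with per-entry variance $(g_1-g_2)^2(t\sub{AC1}+t\sub{AC2})\power'$, and $t\sub{AC2}=t\sub{AC1}+t\sub{MD1}-t\sub{FP1}$ then yields~\eqref{eq:kappaAC_separate}. (Your $\pm$ sign is the right one; the paper's $+$ between the two AC sums is a harmless typo, since the variance is the same.)
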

\begin{proof}
    The proof follows along the same lines as the proof of Theorem~\ref{th:bound}, with the fundamental difference that the assumption of common codebook causes an increase in $\kappa\sub{AC}$, which represents the variance of a term related to the accidentally corrected messages. See \arxivonly{Appendix~\ref{proof:bound_separate}}\confonly{\cite[App.~C]{Krishnan2026location}}.
\end{proof}



    

\subsubsection{Interference-Cancellation Decoding} 
Assume without loss of generality that $g_1 \ge g_2$. We now consider an interference-cancellation decoder that operates by first decoding messages from cluster~$1$ as
\begin{equation}
    \widehat \setW_1 = \argmin_{\setW'_1 \subset [M] \colon |\setW_1'| = \Kaf} \|\rvecy - g_1 c(\setW'_1)\| \label{eq:SIC_dec_1}
\end{equation}
with $c(\setW) = \sum_{w\in\setW} \rvecc_{w}$,
and then canceling the interference from  cluster~1 to decode messages coming from cluster~2 as
\begin{equation}
    \widehat \setW_2 = \argmin_{\setW'_2 \subset [M] \setminus \widehat\setW_1 \colon |\setW_2'| = \Kas} \|\rvecy - g_1 c(\widehat \setW_1) - g_2 c(\setW'_2)\|. \label{eq:SIC_dec_2}
\end{equation}
We state a random-coding bound for this strategy in the following theorem.
\begin{theorem}[Random-coding bound, common codebook, interference-cancellation decoding] \label{th:bound_IC}
  Fix $\power' < \power$. For the Gaussian {UMA} channel~\eqref{eq:io-relation}, there exists an $(\M, \n, \power,  \epsilon)$ random-access code for which $\epsilon$ is bounded as in~\eqref{eq:bound} with 
    $E_0(\rho_1,\rho_2,\rho_3)$ replaced by 
    \vspace{-.2cm}
    \begin{align}
        E_0(&\rho_1,\rho_2,\rho_3) =  \max_{\lambda_1, \lambda_2} \Big(\frac12 \rho_1\rho_2\rho_3 \ln\det(\matidentity_2 - 2 \matSigma\sub{AC} \matU\sub{AC}) \notag \\
        &\qquad + \frac12 \rho_2 \rho_3 \ln\det(\matidentity_2 - 2 \matSigma\sub{FP} \matU\sub{FP}) \notag \\
        &\qquad + \frac12 \rho_3 \ln\det(\matidentity_2 - 2 \matSigma\sub{MD} \matU\sub{MD}) \notag \\
        &\qquad + \frac12 \ln\det(\matidentity_2 - 2 \matSigma\sub{ZC} \matU\sub{ZC})\Big), \\ 
        \matSigma\sub{AC} &= \diag(t\sub{AC1} \power', (t\sub{AC1} + t\sub{MD1} - t\sub{FP1}) \power'), \label{eq:Sigma_AC} \\ 
        \matSigma\sub{FP} &= \diag(t\sub{FP1} \power', (t - t\sub{FP1}) \power'), \label{eq:Sigma_FP}\\ 
        \matSigma\sub{MD} &= \diag(t\sub{MD1} \power', (t - t\sub{MD1}) \power'), \label{eq:Sigma_MD}\\ 
        \matSigma\sub{ZC} &= \diag(1, (\Kas - t + t\sub{FP1} - t\sub{AC1}) \power'), \label{eq:Sigma_ZC}\\ 
        \matU\sub{AC} &= \begin{bmatrix}
        u\sub{AC}\of{1} & u\sub{AC}\of{2} \\
        u\sub{AC}\of{2} & u\sub{AC}\of{3}
        \end{bmatrix}, \label{eq:U_AC} \\
        u\sub{AC}\of{1} &= -\lambda_1 g_1^2 -\lambda_2 g_2(g_2 - 2 g_1), \\
        u\sub{AC}\of{2} &= \lambda_1 g_1 (g_1 - g_2) + \lambda_2 g_2 (g_2 - 2 g_1), \\
        u\sub{AC}\of{3} &= -\lambda_1 g_1 (g_1 - 2g_2) - \lambda_2 g_2 (g_2 - 2g_1), \\ 
        \matU\sub{FP} &= 2 \rho_1  \matLambda\sub{FP} \matQ\sub{FP} - \rho_1 \!\begin{bmatrix}
            \lambda_1 g_1^2 & \!\!\lambda_2g_1 g_2 \\ \lambda_2g_1 g_2\!\! & \lambda_2 g_2^2 
        \end{bmatrix}, \label{eq:U_FP} \\ 
        \matQ\sub{FP} &= \begin{bmatrix}
            \lambda_1 g_1^2 - \lambda_2 g_1 g_2 & \lambda_2 g_2(g_1 - g_2) \\
            -\lambda_1 g_1 (g_1 \!-\! g_2) + \lambda_2 g_1 g_2\! & \!-\lambda_2 g_2 (g_1 \!-\! g_2)
        \end{bmatrix}\!, \\ 
        \matLambda\sub{FP} &= \tp{\matQ}\sub{FP} (\matSigma\sub{AC}^{-1} \!-\! 2 \matU\sub{AC})^{-1}, \\ 
        \matU\sub{MD} &= 2\rho_2 \matLambda\sub{MD} \matQ\sub{MD} + 2 \rho_2 \rho_1 \matLambda_{\matM} \matM \notag \\
    &\quad - \rho_2 \rho_1 \begin{bmatrix}
        \lambda_1 g_1^2 & (\lambda_1 + \lambda_2) g_1 g_2 \\
        (\lambda_1 + \lambda_2) g_1 g_2 & \lambda_2 g_2^2
    \end{bmatrix}, \label{eq:U_MD} \\ 
    \matM &= \lambda_1 g_1 g_2 \begin{bmatrix}
        0 & - 1 \\ 0 & 1
    \end{bmatrix}  - \matQ\sub{FP}, 
    \\ 
    \matQ\sub{MD} &= 2\rho_1 \matLambda\sub{FP} \matM + \rho_1 \begin{bmatrix}
        \lambda_1 g_1^2 & (\lambda_1 + \lambda_2) g_1 g_2 \\
        \lambda_2 g_1 g_2 & \lambda_2 g_2^2
    \end{bmatrix}, \\ 
    \matLambda\sub{MD} &=
    \tp{\matQ}\sub{MD} (\matSigma\sub{FP}^{-1} - 2\matU\sub{FP})^{-1}, \\ 
    \matLambda_{\matM} &= \tp{\matM} (\matSigma\sub{AC}^{-1} - 2\matU\sub{AC})^{-1}, \\ 
    \matU\sub{ZC} &= 2\rho_3 \tp{\matQ}\sub{ZC}(\matSigma\sub{MD}^{-1} - 2  \matU\sub{MD})^{-1} \matQ\sub{ZC} \notag \\
    &\quad + 2 \rho_3 \rho_2 \tp{\matLambda}\sub{ZCa} (\matSigma\sub{FP}^{-1} - 2\matU\sub{FP})^{-1} \matLambda\sub{ZCa} \notag \\
    &\quad + 4 \rho_3 \rho_2 \rho_1 \tp{\matLambda}\sub{ZCb} (\matSigma\sub{AC}^{-1} - 2\matU\sub{AC})^{-1} \matLambda\sub{ZCb}, \label{eq:U_ZC} \\ 
    \matQ\sub{ZC} &= \rho_2 \rho_1  \!\bigg(\!\!-\matidentity_2 \!+\! (4\matLambda\sub{MD} \matLambda\sub{FP} \!+\! 2 \matLambda_{\matM}) \!\! \begin{bmatrix}
        -1\!\! & 1 \\ 1 & \!\!- 1
    \end{bmatrix} \!+\! 2 \matLambda\sub{MD} \!\bigg) \notag \\&\quad \cdot \begin{bmatrix}
        \lambda_1 g_1 & \lambda_1 g_1 g_2 \\
        \lambda_2 g_2 & 0
    \end{bmatrix}\!, \\ 
    \matLambda\sub{ZCa} &= \rho_1 \bigg(\matidentity_2 + 2 \matLambda\sub{FP} \begin{bmatrix}
        -1 & 1 \\ 1 & - 1
    \end{bmatrix}\bigg)  \begin{bmatrix}
        \lambda_1 g_1 & \!\lambda_1 g_1 g_2 \\
        \lambda_2 g_2 & 0
    \end{bmatrix}\!, \\
    \matLambda\sub{ZCb} &= \begin{bmatrix}
        -1 & 1 \\ 1 & - 1
    \end{bmatrix}  \begin{bmatrix}
        \lambda_1 g_1 & \lambda_1 g_1 g_2 \\
        \lambda_2 g_2 & 0
    \end{bmatrix}.
    \end{align}
\end{theorem}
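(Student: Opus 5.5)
The plan follows the architecture of the proof of Theorem~\ref{th:bound}. We first change measure to a setting where the transmitted messages are distinct and every codeword satisfies the power constraint. This costs the additive term $p_0$ in~\eqref{eq:p0}: one part accounts for message collisions, the other for the $\Ka$ events $\|\rvecc_{W_k}\|^2>\n\power$. Under the new measure, the PUPE is at most $\sum_t \frac{t}{\Ka}\Pr[\setE_t]$, where $\setE_t$ is the event that exactly $t$ transmitted messages are lost. As before, we split $\setE_t$ according to
\begin{itemize}
\item the number $t\sub{MD1}$ of missed cluster-1 messages (so $t-t\sub{MD1}$ cluster-2 messages are missed);
\item the number $t\sub{FP1}$ of false positives placed in the cluster-1 estimate;
\item the number $t\sub{AC1}$ of ``accidentally correct'' messages, i.e., true cluster-2 messages that the first stage~\eqref{eq:SIC_dec_1} wrongly declares as cluster-1 messages.
\end{itemize}
The union bound over these counts produces the sums in~\eqref{eq:pt} and the combinatorial rates $R_1,\dots,R_6$, which are unchanged from Theorem~\ref{th:bound}. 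Only $E_0$ has to be recomputed.

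The key difference from joint decoding is that the decoder now satisfies two separate optimality conditions rather than one. Stage~1 requires $\|\rvecy-g_1c(\widehat\setW_1)\|\le\|\rvecy-g_1c(\setW_1)\|$. Stage~2, conditioned on $\widehat\setW_1$, requires $\|\rvecy-g_1c(\widehat\setW_1)-g_2c(\widehat\setW_2)\|\le\|\rvecy-g_1c(\widehat\setW_1)-g_2c(\setW_2\setminus\widehat\setW_1)\|$, with the residual cluster-1 errors acting as interference. We bound the indicator of the intersection of the two events by $\exp(\lambda_1(\cdot)+\lambda_2(\cdot))$ with $\lambda_1,\lambda_2\ge0$; this is the origin of the two Chernoff parameters.

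Next we write the received signal in terms of four independent Gaussian aggregates, each pair stacked as a $2$-vector per channel use:
\begin{itemize}
\item ZC: the noise $\rvecz$, together with the sum of the $\Kas-t+t\sub{FP1}-t\sub{AC1}$ correctly decoded cluster-2 codewords, which cancel only partially;
\item MD: the missed codewords of each cluster;
\item FP: the false-positive codewords of each cluster;
\item AC: the accidentally correct codewords, counted both as decoded in cluster~1 and as missing from the cluster-2 list.
\end{itemize}
Their per-component covariances are exactly $\matSigma\sub{ZC},\matSigma\sub{MD},\matSigma\sub{FP},\matSigma\sub{AC}$ in~\eqref{eq:Sigma_AC}--\eqref{eq:Sigma_ZC}. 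The cross-cluster structure makes the exponent in each case a quadratic form in these $2$-vectors. Because the same codewords enter both stages with gains $g_1$ and $g_2$, the form mixes the two components, which is the source of the off-diagonal entries and of combinations such as $g_2(g_2-2g_1)$.

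We then apply Gallager's $\rho$-trick three times, nested in the same order as in Theorem~\ref{th:bound}:
\begin{itemize}
\item $\rho_1$ for the union over AC sets;
\item $\rho_2$ for the union over FP sets;
\item $\rho_3$ for the union over MD sets.
\end{itemize}
After each application we take the expectation over the innermost Gaussian aggregate using the standard identity $\E[\exp(\tp{\rvecx}\matU\rvecx+\tp{\vecv}\rvecx)]=\det(\matidentity-2\matSigma\matU)^{-1/2}\exp(\tfrac12\tp{\vecv}(\matSigma^{-1}-2\matU)^{-1}\vecv)$. The linear term that remains is a quadratic function of the outer variables, so it is absorbed into the next quadratic form. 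This recursion generates $\matLambda\sub{FP}$, $\matLambda\sub{MD}$, $\matLambda_{\matM}$ and the updated matrices $\matU\sub{FP}$, $\matU\sub{MD}$, $\matU\sub{ZC}$. The exponents $\rho_1\rho_2\rho_3$, $\rho_2\rho_3$, $\rho_3$ and $1$ multiplying the four log-determinants follow from the nesting. Finally, we use i.i.d.\ structure across the $\n$ channel uses to tensorize, and we optimize over $\rho_i$, $\lambda_1$ and $\lambda_2$.

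The main obstacle is this bookkeeping: identifying correctly which aggregate appears in which stage, and with what sign and gain. In particular, AC codewords enter stage~1 as decoded with gain $g_1$ but stage~2 as undecoded with gain $g_2$. The linear cross terms then have to be propagated consistently through three successive Gaussian integrations. I would first verify the matrices in the scalar sanity case $g_1=g_2$, where the bound should reduce to Corollary~\ref{cor:bound_separate}. I would also check that $\matidentity_2-2\matSigma\matU\succ0$ holds on the admissible $(\lambda_1,\lambda_2)$ region, so that every Gaussian expectation is finite.
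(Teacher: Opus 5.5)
Your architecture matches the paper's proof. It uses the same change of measure and the two stage-wise optimality conditions combined through a joint Chernoff bound in $(\lambda_1,\lambda_2)$. It then performs Gaussian integration followed by a $\rho$-trick over AC ($\rho_1$), FP ($\rho_2$) and MD ($\rho_3$), and a final integration over $(\rvecz,c(\setW\sub{C2}))$. The gap is in the set bookkeeping that you yourself call the crux: as set up, your computation would not yield the stated $E_0$. First, there are two accidentally-correct sets, not one. The paper's index $t\sub{AC1}$ is $|\setW_1\cap\widehat\setW_2|$, i.e., cluster-1 messages missed by stage~1 but then output by stage~2. This can happen because stage~2 searches all of $[\M]\setminus\widehat\setW_1$ in the common codebook. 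The set you describe (cluster-2 messages output by stage~1) is $\setW\sub{AC2}=\setW_2\cap\widehat\setW_1$. Its size $t\sub{AC1}+t\sub{MD1}-t\sub{FP1}$ is forced by $|\widehat\setW_1|=|\setW_1|=\Kaf$. Under your definition, $R_1$, $R_2$ and $\matSigma\sub{ZC}$ no longer count the right objects; for example, $R_1$ would pick cluster-2 messages from a pool of $\Kaf-t\sub{MD1}$. Your split also has no slot for $\setW_1\cap\widehat\setW_2$. Likewise, the AC $2$-vector is $(c(\setW\sub{AC1}),c(\setW\sub{AC2}))$: two independent sums over disjoint sets. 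That is why $\matSigma\sub{AC}$ in \eqref{eq:Sigma_AC} is diagonal with unequal entries, rather than the rank-one covariance you would get from the same codewords appearing in two roles.

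Second, in stage~2 the paper compares $\widehat\setW_2$ with the full $\setW_2$. The AC codewords then enter the two sides only through $(g_1-g_2)(c(\setW\sub{AC1})-c(\setW\sub{AC2}))$ and $g_1(c(\setW\sub{AC1})-c(\setW\sub{AC2}))$. The $\lambda_2$-part of the AC quadratic form is therefore $-\lambda_2 g_2(g_2-2g_1)\|c(\setW\sub{AC1})-c(\setW\sub{AC2})\|^2$, which is exactly the $\lambda_2$-part of $u\sub{AC}\of{1},u\sub{AC}\of{2},u\sub{AC}\of{3}$. With your comparison set $\setW_2\setminus\widehat\setW_1$, the right-hand side instead contains $g_1c(\setW\sub{AC1})+(g_2-g_1)c(\setW\sub{AC2})$. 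The $\lambda_2$-contributions to $u\sub{AC}\of{2}$ and $u\sub{AC}\of{3}$ then become $-\lambda_2 g_2(g_1-g_2)$ and $0$, which changes $\matU\sub{AC}$ and everything built from it. Your feasibility concern is legitimate: $\setW_2$ is an admissible candidate in \eqref{eq:SIC_dec_2} only when $\setW\sub{AC2}=\emptyset$, a point the paper's proof passes over. However, $\setW_2\setminus\widehat\setW_1$ has only $\Kas-t\sub{AC2}$ elements, so it is not admissible either. It therefore does not repair the step, and it moves you away from the stated formulas. Finally, your $g_1=g_2$ sanity check would mislead you. Even with equal gains, sequential decoding is not joint ML, and indeed $\matU\sub{AC}$ does not vanish at $g_1=g_2$, whereas $\kappa\sub{AC}=0$ in \eqref{eq:kappaAC_separate}. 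There is no reason for the bound to reduce to Corollary~\ref{cor:bound_separate}.
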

\begin{proof}
    The proof follows similar steps as the proof of Theorem~\ref{th:bound}, with some extensions to capture the sequential nature of the decoding operations~\eqref{eq:SIC_dec_1} and~\eqref{eq:SIC_dec_2}. 
    See \arxivonly{Appendix~\ref{proof:bound_IC}}\confonly{\cite[App.~D]{Krishnan2026location}}.
\end{proof}

\section{Replica Method Prediction} \label{sec:replica}
\par To obtain insights on system performance, we provide next a large-system limit characterization of the {PUPE}  achievable over the channel \eqref{eq:io-relation}. 
To derive the result, we rely on the replica method, in line with its
application to multiuser detection systems~\cite{tanaka2002statistical,Guo2005,guo2005performance}, and the
original {UMA} problem~\cite{polyanskiy2020slides}.  

%
 \par To state the main results of this section, we first introduce the following definition of multiuser efficiency.
 \begin{definition}
     Let 
     \begin{equation}
         B=\sqrt{a} A+N,
         \label{eq:decoup_channel}
     \end{equation}
     with $a>0$, 
     $A\distas\Ber{p}$, $p\in(0,1)$, and   $N\distas\mathcal{N}(0,1)$.
     Let $I(a)$ denote the mutual information $I(A;B)$ and let $G$ be a binary random variable taking values in $\{g_1,g_2\}$ with probability $\mathbb{P}[G=g_\ell]=\gamma_\ell$. Then, for every $\beta>0$, the multiuser efficiency $\eta$ is defined as 
     \begin{equation}
     \eta = \arg\min_{x}\lefto(\beta \Exop_G[I(x\power G^2)] + \tfrac{1}{2}(x - 1 - \ln x)\right).
         \label{eq:eta}
     \end{equation}
 \end{definition}
\vspace{-.1cm}
\subsection{Location-Based Codebook Partitioning}
   
We describe the random codebook corresponding to cluster~$\ell$ as a matrix $\msC_\ell$ with $\M$ columns drawn independently from a  
 $\normal(\veczero, {\matidentity_\n}/{\n} )$
 distribution.  For a fixed ${\mu \in (0,1)}$, we consider the regime in which the total number of active users satisfies $\Ka = \mu\n$.  In contrast to Section \ref{sec:bound}, where each user independently selects a message uniformly at random yielding a message selection probability $1-(1-1/\M)^{\Ka} \approx \Ka/\M$, we consider a setting in which, within each cluster $\ell$ with $\Ka_\ell$ active users, each message is selected independently according to a $\Ber{\Ka_\ell/\M}$
 distribution. 
 Let $\alpha_\ell=\uK_{\rm a \ell}/\Ka$ and $\beta=2\M / \n $. 
 Furthermore, let $\rvecu_\ell'$ denote an $\M \times 1$  vector with entries drawn independently from a $\Ber{2\alpha_\ell\mu/\beta}$ distribution,
 representing message selection in cluster $\ell$. Let finally $\rvecz'\distas\mathcal{N}(\zv,\matidentity_\n)$.  We analyze the large-system performance achievable over the channel 
\begin{equation}
\rvY'=\sqrt{\power} g_1 \msC_1\rvecu_1' + \sqrt{\power} g_2 \msC_2\rvecu_2'+\rvecz'.
    \label{eqn:replica_sub}
\end{equation}

We have the following result.
\vspace{-.1cm}
\begin{claim}[Replica decoupling, location-based codebook partitioning]\label{thm:uma_pl_dep_codebook_replica}
Let $\ell\in\{1,2\}$. Fix $p_\ell=2\alpha_\ell\mu/\beta$, and $\gamma_\ell=1/2$.  
Denote by
\begin{equation}
V_{\ell,\n}' = \Pr\lefto[[\rvecu_{\ell}']_1 = 1 \mid \rvY', \msC_{1}, \msC_{2}\right] 
\label{eqn:vn_defn}
\end{equation}
the marginal posterior probability associated with~\eqref{eqn:replica_sub}.
Consider also the following scalar channels for $\ell \in\{1,2\}$:
\begin{equation}
     B_\ell=\sqrt{\power \eta}g_\ell A_\ell+N_\ell.
     \label{eq:replica_decoup_channels_loc_based}
\end{equation}
 Here, $\eta$ is as in \eqref{eq:eta}, $A_\ell\distas\Ber{p_\ell\gamma_\ell}$,  and $N_\ell\distas\mathcal{N}(0,1)$, independent of $A_\ell$. Let $\M, \n, \Ka \to \infty$ with $\beta$, $\mu$, and $\alpha_\ell$ held constant. 
Then, 
\begin{equation}\label{eq:marginal-posterior-partitioning}
    V_{\ell,\n}' \stackrel{D}{\longrightarrow} \Pr[A_\ell = 1 \mid B_\ell].
\end{equation}

%
\end{claim}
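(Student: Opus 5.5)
The plan is to recast \eqref{eqn:replica_sub} as one sparse linear model and then apply the standard replica-symmetric decoupling of Tanaka and Guo--Verd\'u. Stack the two codebooks into $\msC = [\msC_1\;\msC_2]$, an $\n\times 2\M$ matrix with i.i.d.\ $\normal(0,1/\n)$ entries, and stack $\rvecu' = [\tp{\rvecu_1'}\;\tp{\rvecu_2'}]^{\mathsf T}$. Then \eqref{eqn:replica_sub} becomes
\begin{equation*}
\rvY' = \sqrt{\power}\,\msC\,\matG\,\rvecu' + \rvecz', \qquad \matG = \diag(g_1\matidentity_\M, g_2\matidentity_\M).
\end{equation*}
This is a CDMA-type model with $2\M$ ``users'', load $2\M/\n = \beta$, and user-dependent amplitudes. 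An index drawn uniformly from $[2\M]$ falls in block $\ell$ with probability $\gamma_\ell = 1/2$. In that case its amplitude is $g_\ell$ and its input is $\Ber{p_\ell}$.

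The key structural point is that the two priors differ, and the decoder knows which block each column belongs to. So the posterior computed in the replica calculation uses a block-dependent prior. This is exactly the heterogeneous-prior, heterogeneous-power setting of \cite{Guo2005}. The steps are as follows.

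First, write the free energy $-\frac1\n \Exop\ln Z(\rvY',\msC)$. Apply the replica trick $\Exop \ln Z = \lim_{r\to 0}\partial_r \ln\Exop Z^r$. Average over the Gaussian $\msC$ using the central limit theorem for the $\n$ row-wise projections, introducing the overlap order parameters.

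Second, impose replica symmetry. The per-user part of the resulting saddle-point functional factorizes over the $2\M$ columns into an empirical average over blocks. It therefore equals $\beta\,\Exop_G[\cdot]$ of a scalar single-user free energy for the channel $B = \sqrt{\eta\power}\,G A + N$. The remaining Gaussian term gives $\frac12(x-1-\ln x)$.

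Third, recognize that with matched (Bayes-optimal) posteriors the Nishimori identities collapse the order parameters to a single variable. The stationarity condition is then the minimization in \eqref{eq:eta}, where the scalar mutual information enters through the I-MMSE relation.

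Fourth, invoke the decoupling principle. Condition on the tagged index $[\rvecu_\ell']_1$ in block $\ell$. The joint law of the input and its marginal posterior converges to that of $A_\ell$ and $\Pr[A_\ell=1\mid B_\ell]$, with $B_\ell=\sqrt{\power\eta}g_\ell A_\ell+N_\ell$. This yields \eqref{eq:marginal-posterior-partitioning}.

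I expect the main obstacle to be the bookkeeping that produces the definition of $I(\cdot)$ and $\eta$ with a single Bernoulli parameter $p$, even though the two blocks have distinct sparsities $p_\ell$. Concretely, one has to check that the block-wise scalar mutual informations, weighted by $\gamma_\ell=1/2$, are what actually appear in the functional. One also has to check the normalization of the prior as $\Ber{p_\ell\gamma_\ell}$ versus $\Ber{p_\ell}$ in \eqref{eq:replica_decoup_channels_loc_based}. To handle this, I would first carry out the computation with a general two-component mixture of (prior, gain) pairs. Only at the end would I specialize to the parametrization used in \eqref{eq:eta}, verifying consistency by checking the single-cluster case $g_1=g_2$ against \cite{polyanskiy2020slides}.

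As usual, replica symmetry and the exchange of limits $r\to0$ and $\n\to\infty$ are assumed rather than proved, which is why the result is stated as a claim.
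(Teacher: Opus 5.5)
Your route is the paper's route. The paper's proof is a one-line appeal to \cite[Prop.~1]{guo2005performance}, with details deferred to an extended version. Your stacked model $\rvY'=\sqrt{\power}\,[\msC_1\;\msC_2]\,\diag(g_1\matidentity_{\M},g_2\matidentity_{\M})\,\rvecu'+\rvecz'$ is a CDMA-type system with $2\M$ columns, load $\beta=2\M/\n$, and two equally weighted classes with class-dependent gain and prior. That is precisely the reduction that makes such a decoupling result applicable, and your four steps sketch the replica computation it packages. For this claim the paper cites \cite{guo2005performance}, not \cite{Guo2005}; it uses \cite{Guo2005} only in the common-codebook case, where all columns share a single input law. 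If you cite rather than re-derive, make sure the result you invoke allows class-dependent priors. Your two-component computation would supply this anyway.

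The obstacle you flag does not resolve in favor of the statement as printed, so the last step of your plan cannot succeed. That step specializes the result and ``verifies consistency'' with $\Ber{p_\ell\gamma_\ell}$. For every $\n$, $V'_{\ell,\n}\in[0,1]$ and, by the tower property, $\Exop[V'_{\ell,\n}]=\Pr[[\rvecu'_\ell]_1=1]=p_\ell$. Convergence in distribution of $[0,1]$-valued variables preserves means. Yet $\Exop\bigl[\Pr[A_\ell=1\mid B_\ell]\bigr]=p_\ell\gamma_\ell=p_\ell/2$, so the printed limit is impossible.

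What your matched-posterior computation actually yields is $A_\ell\sim\Ber{p_\ell}$. The weight $\gamma_\ell=1/2$ enters only as the fraction of the $2\M$ columns lying in block~$\ell$. Concretely, $\eta$ minimizes $\beta\sum_{\ell}\gamma_\ell I_{p_\ell}(x\power g_\ell^2)+\frac12(x-1-\ln x)$, where $I_{p_\ell}$ is the mutual information of \eqref{eq:decoup_channel} with a $\Ber{p_\ell}$ input. The stationarity condition is $\eta^{-1}=1+\beta\sum_\ell\gamma_\ell\power g_\ell^2\,\mathrm{mmse}_{p_\ell}(\eta\power g_\ell^2)$. This is how the single-$p$ definition \eqref{eq:eta} must be read here.

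The sanity check you propose already exposes the factor of two. With $g_1=g_2$ and $\alpha_1=\alpha_2=1/2$, the stacked model is a single codebook of $2\M$ columns with activity $\Ka/(2\M)=\mu/\beta=p_\ell$, not $p_\ell/2$. So prove the claim with $A_\ell\sim\Ber{p_\ell}$. With that correction, your plan goes through under the usual replica-symmetry and limit-exchange assumptions.
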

\begin{proof}
  The proof follows from~\cite[Prop.~1]{guo2005performance}. See \confonly{\cite[App.~E]{Krishnan2026location}}.
\end{proof}
\vspace{-.3cm}
\begin{remark}
\label{rem:replica_loc_based}
    Claim \ref{thm:uma_pl_dep_codebook_replica} implies that, in the large-system limit, the channel~\eqref{eqn:replica_sub} decouples into the two scalar channels in~\eqref{eq:replica_decoup_channels_loc_based}.
    Note in particular that the two scalar channels have deterministic path losses known to the receiver.
    Following an approach similar to the one detailed in~\cite{polyanskiy2020slides}, we can use these two scalar channels to obtain a large-system characterization of the {PUPE}.
    Specifically, let $\epsilon_\ell$ be the solution of 
    \begin{equation}
        \sqrt{\power\eta} g_\ell =Q^{-1}(\epsilon_\ell)+Q^{-1}(p_\ell\gamma_\ell\epsilon_\ell/(1-p_\ell\gamma_\ell)).
    \end{equation}
    In the large-system limit, the {PUPE} is given by  
    $(\epsilon_1+\epsilon_2)/2$.
\end{remark}
\subsection{Common Codebook}
\label{sec:replica-common}
We denote the common codebook  as a matrix $\msC$ with $\M$ columns drawn independently from a $\normal(\veczero, \matidentity_\n/\n )$ distribution. We let $\mu$ and $\alpha_\ell$  be defined as before, but set now $\beta=\M/\n$, and use a $\Ber{\Ka/\M}$  distribution for message selection. Note that $\Ka/\M=\mu/\beta$. Let $\rvecu''$ be the binary vector describing the message selection, and $G$ a random variable taking values in $\{g_1,~g_2\}$ with $\mathbb{P}[G=g_\ell]=\alpha_\ell$.  
  Finally, let $\rvecz''\distas\mathcal{N}(\zv,\matidentity_\n)$. We analyze the performance achievable over the channel 
\begin{equation}
\rvY''= \sqrt{P}G \msC \rvecu''+\rvecz''.
    \label{eqn:replica_com}
\end{equation}
We have the following result. 
\begin{claim}[Replica decoupling, common codebook]
Let \mbox{$\ell\in\{1,2\}$}. Fix $p=\mu/\beta$, and $\gamma_\ell=\alpha_\ell$. Denote by  
\begin{equation}
V_{\n}'' = \Pr\lefto[[\rvecu'']_1 = 1 \mid \rvY'', \msC''\right], 
\label{eqn:vn2_defn}
\end{equation}
the marginal posterior probability associated with \eqref{eqn:replica_com}. Consider the scalar channel
\begin{equation}
    B''=\sqrt{P\eta}GA''+N''. 
\end{equation}
Here, $\eta$ is as in \eqref{eq:eta}, $A''\sim\Ber{p}$, $G$ takes values in $\{g_1,g_2\}$ with \mbox{$\mathbb{P}[G=g_\ell]=\alpha_\ell$}, and $N''\sim \mathcal{N}(0,1)$. Furthermore, these three random variables are mutually independent. Let $\M,\n,\Ka  \to \infty$ with $\beta$, $\mu$, and $\alpha_\ell$ held constant.
Then,
\begin{equation}
    V_{\n}'' \stackrel{D}{\longrightarrow} \Pr[A'' = 1 \mid B''].
\end{equation}
\label{thm:uma_pl_indep_codebook_replica}
\end{claim}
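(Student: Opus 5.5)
The plan is to recast the channel~\eqref{eqn:replica_com} as a randomly spread CDMA system with unequal received powers and binary (on--off) inputs, and then apply the decoupling principle of~\cite[Prop.~1]{guo2005performance}, just as for Claim~\ref{thm:uma_pl_dep_codebook_replica}.

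First, I make precise how $G$ enters. In~\eqref{eqn:replica_com}, the scalar $G$ should be read per column: column $m$ of $\msC$ receives an amplitude $G_m$, drawn i.i.d.\ with $\Pr[G_m=g_\ell]=\alpha_\ell$. The receiver does not know these amplitudes. This models the fact that, with a common codebook, the receiver cannot tell which cluster a transmitted codeword came from. The fraction $\alpha_\ell$ matches the asymptotic share of active users in cluster~$\ell$, because messages are selected independently of the cluster label.

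With this reading, $\rvY''=\msC\,\mathrm{diag}(\sqrt{P}G_m)\rvecu''+\rvecz''$. Define the effective input symbols $X_m=G_mU_m$, which are i.i.d.\ with law $(1-p)\delta_0+p\alpha_1\delta_{g_1}+p\alpha_2\delta_{g_2}$. The model becomes $\rvY''=\sqrt{P}\,\msC\,\rvecx+\rvecz''$. Here the spreading matrix has i.i.d.\ $\normal(0,1/\n)$ entries, the load is $\M/\n=\beta$, and the input prior is known and discrete. This is exactly the CDMA setting of~\cite{guo2005performance} with an arbitrary input distribution and equal nominal SNR $P$.

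Next, I invoke~\cite[Prop.~1]{guo2005performance}. Under the replica-symmetric ansatz, the posterior mean estimator of each $X_m$ is asymptotically equivalent to that of the scalar channel $B''=\sqrt{P\eta}\,X+N''$. The efficiency $\eta$ solves the fixed-point equation $\eta^{-1}=1+\beta P\,\mathrm{mmse}(P\eta)$, taking the solution that minimizes the free energy.

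I then identify this free energy with~\eqref{eq:eta}. Decompose the scalar mutual information via the chain rule, conditioning on $G$, using the fact that given $G=g$ the input is $g\,\Ber{p}$. This shows that $I(X;\sqrt{s}X+N)$ and $\Exop_G[I(sG^2)]$ agree up to $I(G;B\mid A)$-type terms. The definition in~\eqref{eq:eta} uses $\Exop_G$, so I will check that the stationarity condition of~\eqref{eq:eta}, namely $\beta P\,\Exop_G[G^2\,\mathrm{mmse}_A(xPG^2)]=\tfrac12(1/x-1)\cdot 2$, coincides with Guo--Verd\'u's fixed point under the I-MMSE relation $\frac{d}{ds}I=\tfrac12\mathrm{mmse}$.

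This is the step I expect to be the main obstacle. The two objectives coincide when $G$ is known at the scalar receiver, but here it is not. So I will first try to show that~\eqref{eq:eta} should be read with $I(A;B)$ computed under the mixture, which gives the same minimizer. If that fails, I will note that the claim implicitly adopts this convention.

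Finally, marginalizing the decoupled posterior of $X$ onto the event $\{X\neq0\}=\{A''=1\}$ gives $V_{\n}''\stackrel{D}{\to}\Pr[A''=1\mid B'']$. Here $G$ is independent of $A''$ and unknown at the output. This requires convergence in distribution of the marginal posteriors, which~\cite{guo2005performance} provides for any bounded functional of the posterior.
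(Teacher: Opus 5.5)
Your proposal has a genuine gap at the step that fixes $\eta$, which is the step you yourself flagged. Folding the per-column amplitudes into a three-point input $X_m=G_mU_m$ with common SNR $P$ is a sensible model of an estimator that does not know $G$, and it does give a scalar channel with $G$ marginalized out. But that model's efficiency is not the $\eta$ of \eqref{eq:eta}. Let $B=\sqrt{s}\,GA+N$ and $X=GA$. By the chain rule and the independence of $A$ and $G$,
\[
I(X;B)=I(A,G;B)=I(G;B)+I(A;B\mid G)=I(G;B)+\Exop_G[I(sG^2)].
\]
So the free energy of your model exceeds the objective in \eqref{eq:eta} by $\beta I(G;B)$, which is positive and depends on $s$. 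The stationarity conditions differ as well (with $s=xP$):
\begin{itemize}
\item your model gives $x^{-1}=1+\beta P\,\mathbb{E}[(X-\mathbb{E}[X\mid B])^2]$;
\item \eqref{eq:eta} gives $x^{-1}=1+\beta P\,\mathbb{E}[(X-\mathbb{E}[X\mid B,G])^2]$.
\end{itemize}
When $g_1\neq g_2$ and $p\in(0,1)$, $\mathbb{E}[X\mid B,G]=G\Pr[A=1\mid B,G]$ is not a function of $B$ alone. Hence the second MMSE is strictly smaller for every $x>0$, and no $x$ satisfies both conditions. Both objectives blow up at $0$ and at $\infty$, so their minimizers are interior stationary points, and they therefore always differ. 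Reading \eqref{eq:eta} with the mixture $I(A;B)$ does not help, since that quantity is neither $\Exop_G[I(sG^2)]$ nor the $I(X;B)$ your model produces. Saying that ``the claim implicitly adopts this convention'' is not a proof. As written, you establish decoupling with a different efficiency from the one in the statement.

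The paper avoids this identification. It obtains the claim from \cite[Claim~1]{Guo2005}, the decoupling principle for randomly spread CDMA with an \emph{arbitrary received-SNR distribution}. It applies this with Bernoulli$(p)$ inputs, load $\beta$, and per-user SNR $PG^2$, where $\Pr[G=g_\ell]=\alpha_\ell$ plays the role of the SNR distribution. For that system, Guo and Verd\'u select among fixed points by minimizing $\beta\Exop[I(x\,\mathrm{snr})]+\frac12(x-1-\ln x)$, which is exactly \eqref{eq:eta}. So $\eta$ comes out with no further work. Your underlying concern is still legitimate: in that framework each user's SNR is part of the channel known to the estimator. The statement's pairing of \eqref{eq:eta} with the unconditioned $\Pr[A''=1\mid B'']$ therefore rests on that modeling choice, not on any identity between the two free energies. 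To reproduce the statement's $\eta$, you need the unequal-power route with $G$ as the SNR profile, rather than folding $G$ into the prior.
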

\begin{proof}
  The proof follows from~\cite[Claim~1]
  {Guo2005}. See \confonly{\cite[App.~F]{Krishnan2026location}}.
\end{proof}
\begin{remark}
\label{rem:replica_common}
Note that, in contrast to the previous case, the equivalent scalar channel in the common codebook case is a single fading channel, with fading coefficient $G$ not known to the receiver.
The large-system characterization of the {PUPE} for this scenario can again be carried out along the lines of~\cite{polyanskiy2020slides}.
However, the presence of $G$ precludes a closed-form expression.
Specifically, in the large-system limit, the {PUPE} is given by 
\begin{equation}
        \epsilon=\frac{1-p}{p}\mathbb{P}_{0}\lefto[\ln \frac{\dv{\mathbb{P}_{1}}}{\dv{\mathbb{P}_{0}}} \ge \theta\right]
    \end{equation}
    where $\theta$ is determined by imposing that 
    \begin{equation}
      \mathbb{P}_{1}\lefto[\ln \frac{\dv\mathbb{P}_{1}}{\dv {\mathbb{P}_{0}}}  \ge \theta\right] + \frac{1-p}{p}\mathbb{P}_{0}\lefto[\ln \frac{\dv\mathbb{P}_{1}}{\dv{\mathbb{P}_{0}}} \ge \theta\right] = 1. 
    \end{equation}
   Here, $\mathbb{P}_0=\mathcal{N}(0,1)$ and $\mathbb{P}_1=\sum_{\ell\in\{1,2\}}\gamma_\ell\mathcal{N}(\sqrt{\power \eta}g_\ell,1)$. 
\end{remark}
\section{Numerical Results} \label{sec:results}
We evaluate the minimum energy per bit, defined as $\n \power/(2\B)$, required to achieve a {PUPE} of $0.01$ as a function of the number of active users $\Ka$ for the case in which each user maps messages of $\B=128$ bits to codewords of length $\n=\num{30000}$.
Throughout, we set $g_1=1$, $g_2=0.8$, and $\Kas=2\Kaf$.
In Fig.~\ref{fig:numerical_results}, we depict the
random-coding bound with location-based codebook partitioning in Theorem~\ref{th:bound} as well as those corresponding to common codebook with joint decoding (Corollary~\ref{cor:bound_separate}) and common codebook with  interference-cancellation decoding (Theorem~\ref{th:bound_IC}).
In the figure, we also depict the replica method predictions, obtained via the expressions given in Remark~\ref{rem:replica_loc_based} and Remark~\ref{rem:replica_common}.
We also use the normalization with respect to the effective number of bits suggested in~\cite{polyanskiy2020slides}.
Finally, we provide simulation results for a {CCS} scheme~\cite{Amalladinne2020} in which the decoder operates according to the multisource-{AMP} framework proposed  in~\cite{Cakmak2025}. 
Specifically, in the resulting scheme, each message is divided into $16$ blocks of $16$ bits each. The {CCS} inner decoder performs multisource-{AMP} signal reconstruction,
while the outer decoder stitches together the reconstructed
signals, to ensure that they form valid codewords.

As shown in Figure \ref{fig:numerical_results}, for the channel model considered in the paper, location-based codebook partitioning results in a consistent reduction of the minimum energy  per bit (although this reduction is marginal when $\Ka$ is small) across all types of curves depicted in the figure (random-coding bounds, replica-method predictions, performance of {CCS} schemes).
Also, for the parameter values considered in this section, interference cancellation exhibits poor performance ($\SI{6}{\dB}$ above the location-based codebook partitioning bound for $\Ka=12$).

\begin{figure}[t!]
    \centering
    \begin{tikzpicture}[
spy using outlines= {circle, magnification=3.5, connect spies}
]
    \tikzstyle{every node}=[font=\footnotesize] 
    \begin{axis}[
    scale only axis,
    width=3.0in,
    height=1.6in,
    grid=both,
    grid style={line width=.1pt, draw=gray!10},
    major grid style={line width=.2pt,draw=gray!50},
    xmin=0,
    xmax=300,
    xlabel={\scriptsize number of active users, $\Ka$},
    ymin=0,
    ymax=8,
    ytick = {0, 1, 2, 3, 4, 5, 6, 7, 8},
    ylabel={\scriptsize minimum energy per bit (dB)},
    ylabel style = {xshift=-1mm},
    legend style={draw=none,opacity=.9,at={(0.96,0.57)},anchor=south east,legend cell align=left, 
    nodes={font=\scriptsize}
    },
    ]

    \addplot [color=black, mark=*, line width=1.2pt, mark size=1pt]
    table[row sep=crcr]{
        3       6.41    \\
        12      6.46    \\
        42      6.53    \\ 
        72      6.57    \\ 
        102     6.6     \\ 
        132     6.62    \\ 
        162     6.66    \\ 
        192     6.83     \\    
        222     6.85    \\
        252     6.87    \\
        282     6.96    \\
        312     7.02    \\
    };
    \addlegendentry{common codebook};

    \addplot [color=red, mark=*, dashed, line width=1.2pt, mark size=1pt]
    table[row sep=crcr]{%
        3       6.26     \\
        12      6.31    \\
        42      6.4     \\ 
        72      6.47    \\ 
        102     6.51    \\ 
        132     6.52    \\ 
        162     6.53    \\ 
        192     6.59    \\ 
        222     6.65    \\
        252     6.72    \\ 
        282     6.75    \\
        312     6.80    \\
    };
    \addlegendentry{location-based codebook partitioning};

    \addplot [color=black, mark=*, line width=1.2pt, mark size=1pt]
    table[row sep=crcr]{%
        3       2.3079      \\
        15      2.4873      \\ 
        30      2.6145      \\ 
        60      2.8018      \\ 
        90      2.9484      \\ 
        120     3.0730      \\ 
        150     3.1860      \\    
        180     3.3011      \\
        240     3.44        \\
        300     3.88        \\
        330     4.2828      \\
    };

    \addplot [color=red, mark=*, dashed, line width=1.2pt, mark size=1pt]
    table[row sep=crcr]{%
        3       2.23529114546391        \\    
        15      2.27156303353108        \\ 
        30      2.3009539657918         \\ 
        60      2.34100833280823        \\ 
        90      2.37174797803295        \\ 
        120     2.40591896864964        \\ 
        150     2.4164      \\    
        180     2.4363      \\
        240     2.7137      \\
        300     3.0805      \\
    };
    
    \addplot [color=black, mark=*, line width=1.2pt, mark size=1pt]
    table[row sep=crcr]{%
        3       1.6434      \\
        15      1.5613      \\ 
        60      1.5256      \\ 
        120     1.4957      \\ 
        180     2.1190      \\    
        240     2.83373     \\
        300     3.2113      \\    
    };

    \addplot [color=black, mark=*, line width=1.2pt, mark size=1pt]
    table[row sep=crcr]{%
        3       5.1         \\
        9       7.8075      \\
        12      8.1         \\
    };

    \addplot [color=red, mark=*, dashed, line width=1.2pt, mark size=1pt]
    table[row sep=crcr]{%
        3       1.5794      \\
        15      1.4806      \\ 
        60      1.4443      \\ 
        120     1.4684      \\ 
        180     1.4564      \\    
        240     1.4504     \\
        300     1.9814      \\   
    };

    \coordinate (Center) at (axis cs:192, 6.72);
    \coordinate (Center3) at (axis cs:60, 1.5);

    \node[align = center] at (axis cs:123, 7.4) () {\scriptsize CCS with multisource AMP};
    \node[red, align = center] at (axis cs:108, 1.9) () {\scriptsize Theorem~1};
    \node[align = center] at (axis cs:163, 2.72) () {\scriptsize Corollary~1};
    \node[align = center] at (axis cs:120, 0.75) () {\scriptsize replica method prediction};
    \node[align = center] at (axis cs:37.5, 5.28) () {\scriptsize Theorem~2};

    \node [black, inner sep=0, rotate=45] at (axis cs:185,7.12) {\small $\uparrow$};
    \node [red, inner sep=0, rotate=235] at (axis cs:80, 2.18) {\small $\uparrow$};
    \node [black, inner sep=0, rotate=235] at (axis cs:135, 2.90) {\small $\uparrow$};
    \node [black, inner sep=0, rotate=235] at (axis cs:69, 1.15) {\small $\uparrow$};
    \node [black, inner sep=0, rotate=235] at (axis cs:10, 5.53) {\small $\uparrow$};

    \coordinate (spypoint)     at (252,6.8);
    \coordinate (magnifyglass) at (252,5);

  \end{axis}

    \draw (Center) ellipse [x radius=0.15, y radius=0.2];
    \draw (Center3) ellipse [x radius=0.15, y radius=0.2];
    in node[fill=white] at (magnifyglass);

\end{tikzpicture}    
    \vspace{-1.1cm}
    \caption{Minimum energy per bit to achieve a PUPE of $0.01$ vs.  $\Ka$. } 
    \label{fig:numerical_results}
    \vspace{-4mm}
\end{figure}
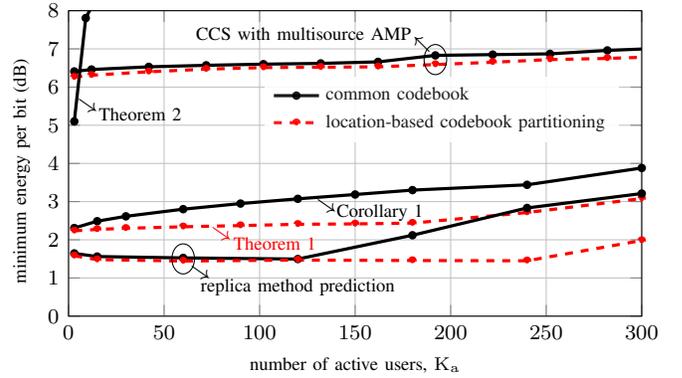


\section{Conclusion} \label{sec:conclusion}

For a Gaussian {UMA} channel characterized by  heterogeneous
path loss, we showed that location-based codebook partitioning outperforms, in terms of the minimum energy per bit required to meet a specific {PUPE}, the conventional {UMA} framework, which utilizes a common codebook for all users. 
These gains were validated through finite-blocklength random coding bounds, replica method large-system limit predictions, and empirical performance of a coding scheme based on {CCS} and multisource-{AMP}.

We anticipate that the energy efficiency gains from location-based codebook partitioning will be even more pronounced in wireless network architectures
featuring distributed access points. Indeed, in such systems, the access points are located so as to ensure uniform quality of service, which should amplify the benefit of the location-based codebook partitioning, as illustrated in~\cite{Cakmak2025} for the case of multisource-{AMP} decoders. Such extension, as well as the inclusion of small-scale fading will be considered in future work.

\clearpage
\bibliographystyle{IEEEtran}  
\bibliography{references}

\arxivonly{ 
\clearpage
\begin{appendices}


\section{Mathematical Preliminaries} \label{app:math_pre}

    The following results will be used in the proof.
    
\begin{lemma}[{Change of measure~\cite[Lemma~4]{Ohnishi2021}}] \label{lem:change_measure}
        Let $p$ and $q$ be two probability measures. Consider a random variable $X$ supported on $\setH$ and a function $f \colon \setH \to [0,1]$. It holds that 
        \begin{align}
            \Exop_p[f(X)] \le \Exop_q[f(X)] + d_{\rm TV}(p,q)
        \end{align}
        where $d_{\rm TV}(p,q)$ denotes the total variation distance between $p$ and $q$.
    \end{lemma}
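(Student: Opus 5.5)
The plan is to write the difference of the two expectations as a single integral against the signed measure $p-q$ and then use $0\le f\le 1$ to bound it by the positive part of that measure, which is exactly the total variation distance. To avoid absolute-continuity issues, we introduce the dominating measure $\nu = p+q$ on $\setH$. Both $p$ and $q$ are absolutely continuous with respect to $\nu$, so by Radon--Nikodym there exist densities $\dot p = \dv p/\dv \nu$ and $\dot q = \dv q/\dv\nu$. We use the convention $d_{\rm TV}(p,q) = \sup_{\setA}|p(\setA)-q(\setA)| = \frac12\int|\dot p - \dot q|\,\dv\nu$.

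With these densities,
\begin{equation*}
\Exop_p[f(X)] - \Exop_q[f(X)] = \int_{\setH} f\,(\dot p - \dot q)\,\dv\nu .
\end{equation*}
Let $\setA = \{x\in\setH : \dot p(x) > \dot q(x)\}$, which is the Hahn set of the signed measure $p-q$. On $\setA^{\rm c}$ the integrand $f(\dot p-\dot q)$ is nonpositive because $f\ge 0$. On $\setA$ it is at most $\dot p - \dot q$ because $f\le 1$. Hence the integral is at most
\begin{equation*}
\int_{\setA}(\dot p-\dot q)\,\dv\nu = p(\setA)-q(\setA).
\end{equation*}
The last step is to identify $p(\setA)-q(\setA)$ with $d_{\rm TV}(p,q)$. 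Since $\int(\dot p-\dot q)\,\dv\nu = 0$, the positive and negative parts of $\dot p-\dot q$ have equal integrals, each equal to $\frac12\int|\dot p-\dot q|\,\dv\nu$. Rearranging then gives the claimed inequality.

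No step is a genuine obstacle. The only point requiring care is the normalization of $d_{\rm TV}$. If the paper intends the unnormalized $\ell_1$ distance $\int|\dot p-\dot q|\,\dv\nu$, the bound still holds, since the right-hand side only becomes larger. Measurability of $f$ is implicitly assumed so that both expectations are defined.
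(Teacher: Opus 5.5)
Your proof is correct and complete. With the dominating measure $p+q$, the set $\setA=\{\dot p>\dot q\}$, and the two pointwise bounds coming from $0\le f\le 1$, you get $\Exop_p[f(X)]-\Exop_q[f(X)]\le p(\setA)-q(\setA)\le d_{\rm TV}(p,q)$. Under the supremum definition the last inequality is immediate, so the half-$L^1$ identification is optional.

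The paper gives no proof of its own here: it imports the lemma from the cited reference (Lemma~4 of Ohnishi et al.), and yours is the standard argument behind that fact. It also applies verbatim when $p$ and $q$ are joint laws on the underlying probability space, as in the paper's application, by replacing $f$ with $f\circ X$.
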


    \begin{lemma}[{Chernoff bound~\cite[Th. 6.2.7]{DeGroot2012ProbStats}}] \label{lem:Chernoff}
        For a random variable $X$ with moment-generating function $\Exop[e^{t X}]$ defined for all $|t| \le b$, it holds for all $\lambda \in [0,b]$ that
        \begin{align}
            \Prob[X \le x] \le e^{\lambda x} \Exop[e^{-\lambda X}].
        \end{align}
    \end{lemma}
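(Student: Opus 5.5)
The plan is to reduce the statement to Markov's inequality applied to a suitably exponentiated random variable. Fix $\lambda \in [0,b]$. Since the moment-generating function $\Exop[e^{tX}]$ is assumed finite for all $|t|\le b$, and $-\lambda$ satisfies $|-\lambda| = \lambda \le b$, the quantity $\Exop[e^{-\lambda X}]$ is finite. This is the only place where the hypothesis on the range of $t$ enters.

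The key step is to rewrite the event. For $\lambda \ge 0$, the map $u \mapsto e^{-\lambda u}$ is nonincreasing, so $\{X \le x\} \subseteq \{e^{-\lambda X} \ge e^{-\lambda x}\}$. For $\lambda > 0$ the two events coincide; for $\lambda = 0$ the right-hand event is the whole space, and the inclusion is trivially true. Hence
\begin{align}
\Prob[X \le x] \le \Prob\bigl[e^{-\lambda X} \ge e^{-\lambda x}\bigr].
\end{align}
Now apply Markov's inequality to the nonnegative random variable $e^{-\lambda X}$ with the strictly positive threshold $e^{-\lambda x}$. This gives
\begin{align}
\Prob\bigl[e^{-\lambda X} \ge e^{-\lambda x}\bigr] \le \frac{\Exop[e^{-\lambda X}]}{e^{-\lambda x}} = e^{\lambda x}\,\Exop[e^{-\lambda X}],
\end{align}
which is the claim.

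Markov's inequality itself follows from the pointwise bound $\ind{Y \ge c} \le Y/c$ for $Y \ge 0$ and $c>0$, after taking expectations. No step here is a genuine obstacle. The only points needing care are checking that $\Exop[e^{-\lambda X}]$ is finite, so that the bound is meaningful, and handling the degenerate case $\lambda = 0$, where the bound reads $\Prob[X\le x]\le 1$.
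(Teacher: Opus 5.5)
Your proof is correct and is the standard argument: the paper gives no proof of its own and just cites \cite[Th.~6.2.7]{DeGroot2012ProbStats}, where the Chernoff bound is obtained the same way, by applying Markov's inequality to the exponential of the random variable. As a minor aside, the finiteness hypothesis on the moment-generating function is not needed for the inequality itself, since the bound holds trivially when the expectation is infinite.
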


    \begin{cor}[Chernoff joint bound] \label{cor:Chernoff_joint}
        For two random variables $X_1$ and $X_2$ with joint moment-generating function $\Exop[e^{t_1 X_1 + t_2 X_2}]$ defined for all $|t_1| \le b_1$ and $|t_2| \le b_2$, it holds for all $\lambda_1 \in [0,b_1]$, $\lambda_2 \in [0,b_2]$ that
        \begin{align}
            \Prob[X_1 \le x_1, X_2 \le x_2] \le e^{\lambda_1 x_1 + \lambda_2 x_2} \Exop[e^{-\lambda_1 X_1 - \lambda_2 X_2}].
        \end{align}
    \end{cor}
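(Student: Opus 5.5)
The corollary follows from Lemma~\ref{lem:Chernoff} (the Chernoff bound), applied to a suitable scalar combination of $X_1$ and $X_2$. The event $\{X_1 \le x_1, X_2 \le x_2\}$ is contained in the event $\{\lambda_1 X_1 + \lambda_2 X_2 \le \lambda_1 x_1 + \lambda_2 x_2\}$ whenever $\lambda_1, \lambda_2 \ge 0$. Indeed, multiplying each inequality by a nonnegative constant preserves its direction, and adding the two preserves the sum. Hence
\begin{align}
\Prob[X_1 \le x_1, X_2 \le x_2] \le \Prob[\lambda_1 X_1 + \lambda_2 X_2 \le \lambda_1 x_1 + \lambda_2 x_2].
\end{align}

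Next, define $Y = \lambda_1 X_1 + \lambda_2 X_2$. We then bound the right-hand side with the elementary Markov step underlying Lemma~\ref{lem:Chernoff}, at parameter $1$:
\begin{align}
\Prob[Y \le y] = \Prob[e^{-Y} \ge e^{-y}] \le e^{y}\,\Exop[e^{-Y}],
\end{align}
with $y = \lambda_1 x_1 + \lambda_2 x_2$. Substituting $Y$ and $y$ yields exactly $e^{\lambda_1 x_1 + \lambda_2 x_2}\Exop[e^{-\lambda_1 X_1 - \lambda_2 X_2}]$. Equivalently, one can invoke Lemma~\ref{lem:Chernoff} directly with $X = Y$ and $\lambda = 1$. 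This requires the moment-generating function of $Y$ to exist at $t = -1$, i.e., $\Exop[e^{-\lambda_1 X_1 - \lambda_2 X_2}]$ must be finite.

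The only point needing care is this finiteness. It is guaranteed by the hypothesis: the joint moment-generating function $\Exop[e^{t_1 X_1 + t_2 X_2}]$ is defined for $|t_1| \le b_1$ and $|t_2| \le b_2$, and $(t_1, t_2) = (-\lambda_1, -\lambda_2)$ lies in this range because $\lambda_i \in [0, b_i]$. Even if the expectation were infinite, the inequality would hold trivially.

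I do not expect any step to be a real obstacle.
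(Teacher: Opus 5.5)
Your proposal is correct and follows the paper's proof: for $\lambda_1,\lambda_2\ge 0$ the event $\{X_1\le x_1, X_2\le x_2\}$ is contained in $\{\lambda_1 X_1+\lambda_2 X_2\le \lambda_1 x_1+\lambda_2 x_2\}$, and the Chernoff bound is then applied to the scalar $\lambda_1 X_1+\lambda_2 X_2$. Your explicit Markov step at parameter $1$, and your check that $(-\lambda_1,-\lambda_2)$ lies in the domain of the joint moment-generating function, fill in details the paper leaves implicit.
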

    \begin{proof}
        The proof follows straightforwardly by noting that $X_1 \le x_1, X_2 \le x_2$ implies $\lambda_1 X_1 + \lambda_2 X_2 \le \lambda_1 x_1 + \lambda_2 x_2$, and by apply Lemma~\ref{lem:Chernoff} to the random variable $\lambda_1 X_1 + \lambda_2 X_2$. 
    \end{proof}
    
    \begin{lemma} [{Gallager's $\rho$-trick~\cite[p.~136]{Gallager1968information}}] \label{lem:Gallager}
        It holds that $\Prob[\cup_i A_i] \le (\sum_{i} \Prob[A_i])^\rho$ for every $\rho \in [0,1]$.
    \end{lemma}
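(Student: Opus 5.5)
The plan is to combine the union bound with the trivial bound that a probability never exceeds one, and then to check an elementary inequality for the function $x \mapsto x^\rho$ on $[0,\infty)$.

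\emph{Step 1.} By subadditivity of probability (the union bound), applied to a finite or countable collection of events,
\begin{align}
\Prob[\cup_i A_i] \le \sum_i \Prob[A_i].
\end{align}
Since $\Prob[\cup_i A_i]$ is also at most $1$, it follows that $\Prob[\cup_i A_i] \le \min\{1, x\}$, where $x \define \sum_i \Prob[A_i] \in [0,\infty]$.

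\emph{Step 2.} It remains to show that $\min\{1,x\} \le x^\rho$ for every $x \ge 0$ and every $\rho \in [0,1]$. I would split into two cases.
\begin{itemize}
\item If $x \ge 1$, then $x^\rho \ge 1 \ge \min\{1,x\}$, because $x^\rho$ is nondecreasing in $\rho$ when $x \ge 1$. This case also covers $x = \infty$.
\item If $0 \le x < 1$, then $x^\rho$ is nonincreasing in $\rho$, so $x^\rho \ge x^1 = x = \min\{1,x\}$.
\end{itemize}
The boundary case $\rho = 0$ is handled by the usual convention $0^0 = 1$. With it the right-hand side equals $1$, and the bound holds trivially.

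There is no real obstacle here. The only points needing care are the conventions at $x = 0$ with $\rho = 0$ and at $x = \infty$. Both are covered by the case split above, and together the two steps give the claim.
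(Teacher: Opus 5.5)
Your argument is correct and matches the standard derivation behind the cited result. The paper gives no proof of its own and only refers to Gallager, where the bound is obtained exactly this way: the union bound, the trivial bound $1$, and the inequality $\min\{1,x\} \le x^\rho$ for $\rho\in[0,1]$. The only loose end is the value of $\infty^0$ at $\rho=0$, which is immaterial since every union in the paper is over finitely many events.
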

    
    \begin{lemma}[Moment-generating function of quadratic forms of a Gaussian vector] \label{lem:chi2}
        Let $\rvecx \in \reals^\n$ and $\rvecx \sim \normal(\vecmu,\matSigma)$. Let $\matA \in \reals^{\n \times \n}$ be a symmetric matrix and $\vecb \in \reals^\n$. For every $\gamma \in \reals$ such that $\matSigma^{-1} + 2\gamma \matA \succ \matzero$, it holds that
        \begin{align}
            &\Exop[\exp(-\gamma(\tp{\rvecx} \matA \rvecx + 2 \tp{\vecb} \rvecx))] \notag \\
            &= \det(\matidentity_N + 2 \gamma \matSigma \matA)^{-1/2} \notag \\
            &\quad \cdot\exp\big(2 \gamma^2\tp{(\matA \vecmu + \vecb)}(\matSigma^{-1} + 2 \gamma  \matA)^{-1}  (\matA \vecmu + \vecb) \notag \\
            &\qquad \quad - \gamma(\tp{\vecmu} \matA \vecmu + 2 \tp{\vecb} \vecmu)\big). \label{eq:quad_Gaussian}
        \end{align}
        In particular, for $\gamma = -1$ and $\vecmu = \veczero$, it holds that
        \begin{align}
            &\Exop[\exp(\tp{\rvecx} \matA \rvecx + 2 \tp{\vecb} \rvecx)] \notag \\
            &= \det(\matidentity_N - 2 \matSigma \matA)^{-1/2} \exp\big(2\tp{ \vecb}(\matSigma^{-1} - 2  \matA)^{-1} \vecb), \label{eq:quad_Gaussian_zeroMean}
        \end{align}
        given that $\matSigma^{-1} \succ 2  \matA$; for $\matSigma = \sigma^2 \matidentity_\n$, $\matA = \matidentity_\n$, and $\vecb = \veczero$, it holds that
        \begin{align}
            \Exop[e^{-\gamma \|\rvecx\|^2}] = (1+2\gamma\sigma^2)^{-\n/2} \exp\bigg(-\frac{\gamma\|\vecmu\|^2}{1+2\gamma\sigma^2}\bigg), \label{eq:mgf_chi2}
        \end{align} 
        for every $\gamma > -\frac{1}{2\sigma^2}$.
    \end{lemma}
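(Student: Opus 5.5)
The statement to prove is the last lemma of the preliminaries: the moment-generating function of a Gaussian quadratic form. I will prove the general identity~\eqref{eq:quad_Gaussian} by direct Gaussian integration, completing the square. The two special cases will then follow by substitution.

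\emph{Step 1: collect the exponent.} Write the expectation as
\begin{align}
(2\pi)^{-\n/2}\det(\matSigma)^{-1/2}\int \exp\bigl(-\tfrac12 \tp{(\vecx-\vecmu)}\matSigma^{-1}(\vecx-\vecmu) - \gamma\tp{\vecx}\matA\vecx - 2\gamma\tp{\vecb}\vecx\bigr)\,d\vecx. \notag
\end{align}
Set $\matK = \matSigma^{-1} + 2\gamma\matA$. This matrix is symmetric because $\matA$ is symmetric, and it is positive definite by hypothesis. The quadratic term in the exponent is then $-\tfrac12\tp{\vecx}\matK\vecx$. The linear term is $\tp{\vecx}(\matSigma^{-1}\vecmu - 2\gamma\vecb)$, and the constant is $-\tfrac12\tp{\vecmu}\matSigma^{-1}\vecmu$.

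\emph{Step 2: complete the square.} Completing the square gives the standard integral
\[
(2\pi)^{\n/2}\det(\matK)^{-1/2}\exp\bigl(\tfrac12\tp{\vecv}\matK^{-1}\vecv\bigr), \qquad \vecv=\matSigma^{-1}\vecmu-2\gamma\vecb.
\]
The determinant prefactor is $\det(\matSigma)^{-1/2}\det(\matK)^{-1/2} = \det(\matidentity_\n + 2\gamma\matSigma\matA)^{-1/2}$.

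\emph{Step 3: simplify the exponent (the only real obstacle).} The remaining exponent is $\tfrac12\tp{\vecv}\matK^{-1}\vecv - \tfrac12\tp{\vecmu}\matSigma^{-1}\vecmu$, and it must be matched to the claimed form. The plan is to write $\matSigma^{-1}\vecmu = \matK\vecmu - 2\gamma\matA\vecmu$, so that $\vecv = \matK\vecmu - 2\gamma\vecw$ with $\vecw = \matA\vecmu + \vecb$. Expanding then gives
\[
\tfrac12\tp{\vecv}\matK^{-1}\vecv = \tfrac12\tp{\vecmu}\matK\vecmu - 2\gamma\tp{\vecmu}\vecw + 2\gamma^2\tp{\vecw}\matK^{-1}\vecw.
\]
Subtracting $\tfrac12\tp{\vecmu}\matSigma^{-1}\vecmu$ leaves $\gamma\tp{\vecmu}\matA\vecmu - 2\gamma\tp{\vecmu}\matA\vecmu - 2\gamma\tp{\vecb}\vecmu + 2\gamma^2\tp{\vecw}\matK^{-1}\vecw$. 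This equals $2\gamma^2\tp{\vecw}\matK^{-1}\vecw - \gamma(\tp{\vecmu}\matA\vecmu + 2\tp{\vecb}\vecmu)$, which is exactly~\eqref{eq:quad_Gaussian}.

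\emph{Step 4: specializations.} For~\eqref{eq:quad_Gaussian_zeroMean}, set $\gamma=-1$ and $\vecmu=\veczero$. Then $\vecw=\vecb$ and the linear-in-$\vecmu$ terms vanish. The positivity condition becomes $\matSigma^{-1}\succ 2\matA$. The sign of $\vecb$ flips consistently, since the left side becomes $\exp(\tp{\rvecx}\matA\rvecx + 2\tp{\vecb}\rvecx)$. For~\eqref{eq:mgf_chi2}, set $\matSigma=\sigma^2\matidentity_\n$, $\matA=\matidentity_\n$, $\vecb=\veczero$. The determinant becomes $(1+2\gamma\sigma^2)^{\n}$. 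The exponent is
\[
2\gamma^2\|\vecmu\|^2\sigma^2/(1+2\gamma\sigma^2) - \gamma\|\vecmu\|^2 = -\gamma\|\vecmu\|^2/(1+2\gamma\sigma^2).
\]
Finally, the condition $\matK\succ\matzero$ reduces to $\gamma > -1/(2\sigma^2)$.
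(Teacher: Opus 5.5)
Your proposal is correct and follows the paper's proof: the same direct Gaussian integration with $\matK=\matSigma^{-1}+2\gamma\matA$ and linear coefficient $\matSigma^{-1}\vecmu-2\gamma\vecb$, the standard Gaussian integral identity, and then substitution for the two special cases. Your Step~3, which rewrites $\matSigma^{-1}\vecmu=\matK\vecmu-2\gamma\matA\vecmu$, spells out the simplification of the exponent that the paper only asserts ``after some simplifications,'' and it checks out.
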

    \begin{proof}
        Denote the quadratic form as $Q = \tp{\rvecx} \matA \rvecx + 2 \tp{\vecb} \rvecx$.
        Using the density of $\rvecx\sim\normal(\vecmu,\matSigma)$, 
        we compute $\Exop[e^{-\gamma Q}]$ as
        \begin{align}
        &\Exop[e^{-\gamma Q}]
        =\frac{1}{(2\pi)^{\n/2}\det(\matSigma)^{1/2}}
        \notag \\
        &\cdot \int_{\mathbb{R}^n}
        \exp\!\big(\!
        -\tfrac12\tp{(\vecx\!-\!\vecmu)} \matSigma^{-1}(\vecx\!-\!\vecmu)
        -\gamma \tp{\vecx} \matA\vecx
        -2\gamma \tp{\vecb} \vecx
        \big)\dv\vecx. \label{eq:tmp570} 
        \end{align}
        We define
        $\matK = \matSigma^{-1}+2\gamma \matA$ and $\vech = \matSigma^{-1}\vecmu - 2\gamma \vecb$.
        Since $\matSigma^{-1}$ and $\matA$ are symmetric, $\matK$ is symmetric. The exponent on the right-hand side of~\eqref{eq:tmp570} becomes $-\tfrac12 \tp{\vecx} \matK \vecx + \tp{\vech} \vecx
        -\tfrac12 \tp{\vecmu} \matSigma^{-1}\vecmu$.
        Hence,
        \begin{align}
        \mathbb{E}\!\left[e^{-\gamma Q}\right]
        &=\frac{\exp\!\big(-\tfrac12\tp{\vecmu} \matSigma^{-1}\vecmu\big)}{(2\pi)^{n/2}\det(\matSigma)^{1/2}} \notag \\
        &\quad \cdot 
        \int_{\mathbb{R}^n} \exp\!\Big(-\dfrac12 \tp{\vecx} \matK \vecx + \tp{\vech} \vecx\Big)\dv\vecx. \label{eq:tmp582}
        \end{align}
        
        Under the condition that $\matK$ is positive definite, the standard Gaussian integral identity yields
        \begin{multline}
        \int_{\mathbb{R}^n}\exp\!\Big(-\dfrac12 \tp{\vecx} \matK \vecx + \tp{\vech} \vecx\Big)\dv\vecx
        \\ =(2\pi)^{n/2}\det(\matK)^{-1/2}\exp\!\Big(\dfrac12 \tp{\vech} \matK^{-1}\vech\Big).
        \end{multline}
        Substituting this into~\eqref{eq:tmp582} gives
        \begin{align}
        \Exop\left[e^{-\gamma Q}\right]
        &= \det(\matSigma)^{-1/2}\det(\matK)^{-1/2}
        \notag \\
        &\quad \cdot \exp\!\Big(-\dfrac12\tp{\vecmu} \matSigma^{-1}\vecmu + \dfrac12 \tp{\vech} \matK^{-1}\vech\Big),
        \end{align}
        which leads to~\eqref{eq:quad_Gaussian} after some simplifications of the determinant and exponent terms. The particular cases~\eqref{eq:quad_Gaussian_zeroMean} and~\eqref{eq:mgf_chi2} follow straightforwardly from~\eqref{eq:quad_Gaussian}.
    \end{proof}
\section{Proof of Theorem~\ref{th:bound}}
\label{proof:bound}
		
We analyze the PUPE achieved with the random-coding scheme introduced in Section~\ref{sec:separateCodebooks_jointDec}, averaged over the Gaussian code ensemble. We denote by $\setW_\ell$ the set of messages transmitted by users in cluster~$\ell$, $\ell \in \{1,2\}$, and define $(\widehat \setW_1, \widehat \setW_2)$ as in~\eqref{eq:jointDec}. Furthermore, we denote by $\setW\sub{MD}$ the set of misdetected messages, i.e., $\setW\sub{MD} = \widetilde\setW \setminus \widehat\setW$, and by $\setW\sub{FP}$ the set of false-positive messages, i.e., $\setW\sub{FP} = \widehat\setW \setminus \widetilde\setW$. The PUPE can be expressed as
\begin{equation}
    P\sub{e} = \Exop\lefto[\frac{|\setW\sub{MD}|}{|\widetilde \setW|}\right]. 
    \label{eq:PUPE_frac}
\end{equation}




\subsection{Change of Measure}
We apply Lemma~\ref{lem:change_measure} to the random variable $\frac{|\setW\sub{MD}|}{|\widetilde \setW|}$ to replace the measurement under which the expectation is taken by the one under which: i) the active users transmit distinct messages, i.e., $|\widetilde \setW| = \Ka$ and $\widetilde W_1, \dots, \widetilde W_{\Ka}$ are sampled uniformly without replacement from $[\M]$; ii) $\rvecx_{\ell,k} = \rvecc_{\ell,W_k}$, $\forall \ell, k$, instead of $\rvecx_{\ell,k} = \rvecc_{\ell,W_k} \ind{\|\rvecc_{\ell,W_k}\|\le \n P}$. The total variation between the original measure and the new one is upper-bounded by 
\begin{align}
    &\Prob\lefto[|\widetilde \setW| < \Ka\right] + \Prob\lefto[ \exists k\in [\Ka] \colon \|\rvecc_{\ell,W_k}\|\ge \n P \right] \notag \\
    &\le 1 - \frac{\M!}{M^{\Ka}(M-\Ka)!} + \Ka\frac{\Gamma(\n/2,\n P/(2\power'))}{\Gamma(\n/2)} \label{eq:tmp534}\\
    &= p_0.
\end{align}
The inequality~\eqref{eq:tmp534} follows from the same analysis as in~\cite[App.~A-A]{Ngo2023_randomActivity}.
We consider implicitly the new measure hereafter at a cost of adding $p_0$ to the original expectation in~\eqref{eq:PUPE_frac}. Specifically, we expand this expectation as
\begin{align} \label{eq:bound_after_change_of_measure}
    P\sub{e} \le \sum_{t = 0}^{\Ka} \frac{t}{\Ka} \Prob\lefto[|\setW\sub{MD}| = |\setW\sub{FP}| = t\right] + p_0,
\end{align}
and next focus on upper-bounding $\Prob\lefto[|\setW\sub{MD}| = |\setW\sub{FP}| = t\right]$ under the new measure.

\subsection{Message Sets}
We further split the sets of misdetected messages, false-positive messages, and correctly decoded messages, as depicted in Fig.~\ref{fig:venn}. We first split $\setW\sub{MD}$ into two sets $\setW\sub{MD1}$ and $\setW\sub{MD2}$ that contain the misdetected messages of users in clusters $1$ and $2$, respectively. Similarly, we split $\setW\sub{FP}$ into $\setW\sub{FP1}$ and $\setW\sub{FP2}$. Recall that the decoded sets of messages for clusters $1$ and $2$ are $\widehat\setW_1$ and $\widehat\setW_2$, respectively. We denote $\setW\sub{AC1} = \setW_1 \cap \widehat\setW_2$, which is the set of messages that are false positives from the perspective of cluster~$2$ but happen to coincide with messages transmitted by cluster~$1$. We refer to these messages as ``accidentally corrected'' messages (hence the subscript AC). Similarly, we denote the set of accidentally corrected messages for cluster~$2$ as $\setW\sub{AC2} = \setW_2 \cap \widehat\setW_1$. Finally, we denote the sets of other correctly decoded messages as $\setW\sub{C1} = \widehat\setW_1 \setminus \setW\sub{MD1} \setminus \setW\sub{AC1}$ and $\setW\sub{C2} = \widehat\setW_2 \setminus \setW\sub{MD2} \setminus  \setW\sub{AC2}$. 
    \begin{figure}[t!]
		\centering
		\begin{tikzpicture}[thick,scale=.9, every node/.style={scale=.95}]
			\def\radius{2.15cm}
			\def\radiusB{\radius}
			\def\mycolorbox#1{\textcolor{#1}{\rule{2ex}{2ex}}}
			\colorlet{colori}{gray!80}
			\colorlet{colorii}{gray!20}
			
			\coordinate (ceni) at (0,0);
			\coordinate[xshift=.8*\radius] (cenii);
			
			\coordinate (edge1a) at (-\radiusB,0cm);
			\coordinate(edge1b) at (\radiusB,-.2cm);
			
			\coordinate (edge2a) at (\radius-\radiusB-.3cm,0cm);
			\coordinate (edge2b) at (\radius+\radiusB-.35cm,-.2cm);
			
			\draw[fill=colori,fill opacity=0.5] (ceni) circle (\radiusB);
			\draw[fill=colorii,fill opacity=0.5] (cenii) circle (\radius);
			\draw (ceni) circle (\radiusB);
			
			\draw (edge1a) to (edge2a);
			\draw (edge1b) to (edge2b);

            \draw (-.35cm,0cm) to node[midway,above=.2cm] {$\setW\sub{C1}$} (\radiusB,0cm);

            \draw (-.12cm,1cm) to node[midway,above] {$\setW\sub{AC1}$}(\radiusB-.2cm,1cm);

            \draw (-.12cm,-1cm) to node[midway,above=.2cm] {$\setW\sub{C2}$} node[midway,below] {$\setW\sub{AC2}$} (\radiusB-.2cm,-1cm);

			\draw[-latex] (-1.2*\radius,-0.6*\radius) node[below,xshift=-.4cm,text width=3cm,font=\linespread{1}\selectfont,align=center] {\small Distinct transmitted messages $\widetilde{\setW}$} -- (-0.85*\radius,-0.5*\radius);
            
			\draw[-latex] (2*\radius,-0.6*\radius) node[below,text width=2cm,font=\linespread{1}\selectfont,align=center] {\small Decoded messages $\widehat{\setW}$} -- (1.68*\radius,-0.5*\radius);
			
			\node[yshift=1.2*\radius,xshift=-1.25cm,text width=4.3cm,font=\linespread{1}\selectfont,align=center] {MDs: \\ ${\setW_{\rm MD}} = \setW\sub{MD1} \cup {\setW\sub{MD2}}$ \\ $= \widetilde{\setW} \setminus \widehat{\setW}$};
			
			\node[yshift=1.2*\radius,xshift=1.5cm,text width=4.3cm,font=\linespread{1}\selectfont,align=center] at (cenii) {\small FPs: \\ ${\setW_{\rm FP}} = {\setW_{\rm FP1}} \cup {\setW_{\rm FP2}}$ \\ $\ = \widehat{\setW} \setminus \widetilde{\setW}$};
			
			\node[yshift=.43*\radius,xshift=-.8cm,text width=2.3cm,font=\linespread{1}\selectfont,align=center] at (ceni) {\small $\setW\sub{MD1}$};
			\node[yshift=-.4*\radius,xshift=-.5*\radius,text width=1.8cm,font=\linespread{1}\selectfont,align=center] at (ceni) {\small $\setW\sub{MD2}$};
			\node[yshift=.4*\radius,xshift=.4*\radius,text width=1.7cm,font=\linespread{1}\selectfont,align=center] at (cenii) {\small $\setW\sub{FP1}$};
			\node[yshift=-1.2cm,xshift=.4*\radius,text width=1.9cm,font=\linespread{1}\selectfont,align=center] at (cenii) {\small $\setW\sub{FP2}$};
		\end{tikzpicture}
		\caption{A diagram depicting the relation between the sets of messages.}
		\label{fig:venn}
	\end{figure}

Exploiting symmetry, we assume without loss of generality that $\widetilde \setW = [\Ka]$, $\widetilde \setW_1 = [\Kaf]$, and thus $\widetilde \setW_2= [\Kaf+1:\Ka]$. 
Denote the cardinality of the message sets as $t\sub{MD1} = |\setW\sub{MD1}|$, $t\sub{FP1} = |\setW\sub{FP1}|$, $t\sub{AC1} = |\setW\sub{AC1}|$, and $t\sub{AC2} = |\setW\sub{AC2}|$. Under the event $|\setW\sub{MD}| = |\setW\sub{FP}| = t$, we have that $|\setW\sub{MD2}| = t - t\sub{MD1}$ and $|\setW\sub{FP2}| = t - t\sub{FP1}$. As $\setW_1 = \setW\sub{C1} \cup \setW\sub{AC1} \cup \setW\sub{MD1}$, $\widehat\setW_1 = \setW\sub{C1} \cup \setW\sub{AC2} \cup \setW\sub{FP1}$, and $|\setW_1| = |\widehat\setW_1| = \Kaf$, we obtain that 
\begin{equation} \label{eq:t1_relation}
    t\sub{FP1} + t\sub{AC2} = t\sub{MD1} + t\sub{AC1} = \Kaf - t\sub{C1} \le \Kaf.
\end{equation}
Similarly, we have that
\begin{equation} \label{eq:t2_relation}
    t\sub{FP2} + t\sub{AC1} = t\sub{MD2} + t\sub{AC2} = \Kas - t\sub{C2} \le \Kas.
\end{equation}
%
Therefore, given $t$, $t\sub{MD1}$ is upper-bounded by both $t$ and $\Kaf$, and lower-bounded as $t\sub{MD1} = t - t\sub{MD2} \ge t - \Kas$. That is, $t\sub{MD1}$ is bounded in $\setT_t$ defined in~\eqref{eq:Tt}. Similarly, so is $t\sub{FP1}$. 
Furthermore, given $t\sub{MD1}$ and $t\sub{FP1}$, $t\sub{AC1}$ is bounded in $\setT'_{t,t\sub{MD1},t\sub{FP1}}$ defined in~\eqref{eq:Ttt}.

\subsection{Pairwise Error Event}
Using the above set definitions, we express the received signal as
\begin{align}
    \rvecy &= g_1 [c_1(\setW\sub{MD1})  + c_1(\setW\sub{AC1}) + c_1(\setW\sub{C1})] \notag \\
    &\quad + g_2 [c_2(\setW\sub{MD2})  + c_2(\setW\sub{AC2}) + c_2(\setW\sub{C2})]  + \rvecz. \label{eq:y_expression}
\end{align}
Furthermore, the sums of the codewords corresponding to the decoded message sets $(\widehat \setW_1, \widehat \setW_2)$ are expressed as
\begin{align}
    c_1(\widehat \setW_1) &= c_1(\setW\sub{FP1}) + c_1(\setW\sub{AC2}) + c_1(\setW\sub{C1}), \label{eq:hatW1_expression}\\
    c_2(\widehat \setW_2) &= c_2(\setW\sub{FP2}) + c_2(\setW\sub{AC1}) + c_2(\setW\sub{C2}). \label{eq:hatW2_expression}
\end{align}
The pairwise error event $\widetilde \setW \to \widehat \setW$ implies that 
\begin{align}
    \|\rvecy - g_1 c_1(\widehat \setW_1) - g_2 c_2(\widehat \setW_2)\| < \|\rvecz\|. \label{eq:pairwise}
\end{align}
Using~\eqref{eq:y_expression}, \eqref{eq:hatW1_expression}, and \eqref{eq:hatW2_expression}, we obtain that~\eqref{eq:pairwise} is equivalent~to
\begin{align}
    \|&\rvecz + g_1 c_1(\setW\sub{MD1}) + g_2 c_2(\setW\sub{MD2}) \notag \\
    &\quad - g_1 c_1(\setW\sub{FP1}) - g_2 c_2(\setW\sub{FP2}) \notag \\
    &\quad + g_1 c_1(\setW\sub{AC1}) - g_2 c_2(\setW\sub{AC1}) \notag \\
    &\quad - g_1 c_1(\setW\sub{AC2}) + g_2 c_2(\setW\sub{AC2})\|    \le \|\rvecz\|. \label{eq:pairwise_subsets}
\end{align}
We denote by $F(\setW_{{\rm MD}\ell}, \setW_{{\rm FP}\ell}, \setW_{{\rm AC}\ell}, \ell\in \{1,2\})$ the set of $(\setW_{{\rm MD}\ell}, \setW_{{\rm FP}\ell}, \setW_{{\rm AC}\ell}, \ell\in \{1,2\})$ such that~\eqref{eq:pairwise_subsets} holds. 

\subsection{Error-Exponent Analysis} 

By writing the event $|\setW\sub{MD}| = |\setW\sub{FP}| = t$ as a union of the pairwise error events $F(\setW_{{\rm MD}\ell}, \setW_{{\rm FP}\ell}, \setW_{{\rm AC}\ell}, \ell\in \{1,2\})$, we have that
\begin{align}
    &\Prob\lefto[|\setW\sub{MD}| = |\setW\sub{FP}| = t\right] \notag \\
    &= \Prob\Bigg[\bigcup_{t\sub{MD1} \in \setT_t} \bigcup_{t\sub{FP1} \in \setT_t} \bigcup_{t\sub{AC1} \in \setT'_{t,t\sub{MD1},t\sub{FP1}}} \notag \\
    &\qquad \bigcup_{\setW\sub{MD1} \subset \binom{[\Kaf]}{t\sub{MD1}}}  
    \bigcup_{\setW\sub{MD2} \subset \binom{[\Kaf+1:\Ka]}{t- t\sub{MD1}}}  \notag \\ 
    &\qquad 
    \bigcup_{\setW\sub{FP1} \subset \binom{[\Ka + 1:\M]}{t\sub{FP1}} } 
    \bigcup_{\setW\sub{FP2} \subset \binom{[\Ka + 1:\M] \setminus \setW\sub{FP1}}{t-t\sub{FP1}}} \notag \\ 
    &\qquad 
    \bigcup_{\setW\sub{AC1} \subset \binom{[\Kaf] \setminus \setW\sub{MD1}}{t\sub{AC1}}} 
    \bigcup_{\setW\sub{AC2} \subset \binom{[\Kaf+1 : \Ka] \setminus \setW\sub{MD2}}{t\sub{AC1} + t\sub{MD1} - t\sub{FP1}}} \notag \\
    &\qquad F(\setW_{{\rm MD}\ell}, \setW_{{\rm FP}\ell}, \setW_{{\rm AC}\ell}, \ell\in \{1,2\}) \Bigg] \\ 
    &\le \sum_{t\sub{MD1} \in \setT_t} \sum_{t\sub{FP1} \in \setT_t} \sum_{t\sub{AC1} \in \setT'_{t,t\sub{MD1},t\sub{FP1}}}  \notag \\
    &\quad \Prob\Bigg[\bigcup_{\setW\sub{MD1} \subset \binom{[\Kaf]}{t\sub{MD1}}}  
    \bigcup_{\setW\sub{MD2} \subset \binom{[\Kaf+1:\Ka]}{t- t\sub{MD1}}}  \notag \\ 
    &\qquad 
    \bigcup_{\setW\sub{FP1} \subset \binom{[\Ka + 1:\M]}{t\sub{FP1}} } 
    \bigcup_{\setW\sub{FP2} \subset \binom{[\Ka + 1:\M] \setminus \setW\sub{FP1}}{t-t\sub{FP1}}} \notag \\ 
    &\qquad 
    \bigcup_{\setW\sub{AC1} \subset \binom{[\Kaf] \setminus \setW\sub{MD1}}{t\sub{AC1}}} 
    \bigcup_{\setW\sub{AC2} \subset \binom{[\Kaf+1 : \Ka] \setminus \setW\sub{MD2}}{t\sub{AC1} + t\sub{MD1} - t\sub{FP1}}} \notag \\
    &\qquad F(\setW_{{\rm MD}\ell}, \setW_{{\rm FP}\ell}, \setW_{{\rm AC}\ell}, \ell\in \{1,2\}) \Bigg]. \label{eq:tmp555}
\end{align}

Given $c_\ell(\setW_{{\rm MD}\ell})$, $c_\ell(\setW_{{\rm FP}\ell})$, $\ell \in \{1,2\}$, and $\rvecz$, it holds for every $\lambda > 0$ 
that
\begin{align}
    &\Prob\big[F(\setW_{{\rm MD}\ell}, \setW_{{\rm FP}\ell}, \setW_{{\rm AC}\ell}, \ell\in \{1,2\}) \big] \notag \\ 
    &\le \exp(\lambda\|\rvecz\|^2) \Exop_{\setW\sub{AC1}, \setW\sub{AC2}}\big[\exp(-\lambda\|\rvecz \notag \\
    &\quad +  g_1 c_1(\setW\sub{MD1}) + g_2 c_2(\setW\sub{MD2}) \notag \\
    &\quad - g_1 c_1(\setW\sub{FP1}) - g_2 c_2(\setW\sub{FP2})  + g_1 c_1(\setW\sub{AC1}) \notag \\
    &\quad - g_2 c_2(\setW\sub{AC1})  - g_1 c_1(\setW\sub{AC2}) + g_2 c_2(\setW\sub{AC2})\|^2)\big] \label{eq:tmp515}\\ 
    &= \exp(\lambda\|\rvecz\|^2) (1+ 2\lambda(g_1^2 + g_2^2)(t\sub{AC1} + t\sub{AC2})\power')^{-\n/2} \notag \\
    &\quad \cdot \exp\big(-\lambda\|\rvecz  +  g_1 c_1(\setW\sub{MD1}) + g_2 c_2(\setW\sub{MD2}) \notag \\
    &\qquad - g_1 c_1(\setW\sub{FP1}) - g_2 c_2(\setW\sub{FP2})\|^2)\notag \\
    &\qquad \cdot (1+ 2\lambda(g_1^2 + g_2^2)(t\sub{AC1} + t\sub{AC2})\power')^{-1}\big),  \label{eq:tmp519}
\end{align}
where we have applied the Chernoff bound in Lemma~\ref{lem:Chernoff} in~\eqref{eq:tmp515} and then used~\eqref{eq:quad_Gaussian} in Lemma~\ref{lem:chi2} to compute the expectation. Note that $(g_1^2 + g_2^2)(t\sub{AC1} + t\sub{AC2})$ is equal to $\kappa\sub{AC}$ defined in~\eqref{eq:kappaAC_joint}. Next, we apply Gallager's $\rho$-trick in Lemma~\ref{lem:Gallager} to get that, given $c_\ell(\setW_{{\rm MD}\ell})$, $c_\ell(\setW_{{\rm FP}\ell})$, $\ell \in \{1,2\}$, and $\rvecz$, it holds for every $\rho_1 \in [0,1]$ that 
\begin{align}
    &\Prob\Bigg[
    \bigcup_{\setW\sub{AC1} \subset \binom{[\Kaf] \setminus \setW\sub{MD1}}{t\sub{AC1}}} 
    \bigcup_{\setW\sub{AC2} \subset \binom{[\Kaf+1 : \Ka] \setminus \setW\sub{MD2}}{t\sub{AC1} + t\sub{MD1} - t\sub{FP1}}} \notag \\
    &\qquad F(\setW_{{\rm MD}\ell}, \setW_{{\rm FP}\ell}, \setW_{{\rm AC}\ell}, \ell\in \{1,2\}) \Bigg] \notag \\
    &\le \binom{\Kaf-t\sub{MD1}}{t\sub{AC1}}^{\rho_1} \binom{\Ka - \Kaf - t + t\sub{MD1}}{t\sub{AC1} + t\sub{MD1} - t\sub{FP1}}^{\rho_1}
    \notag \\ 
    &\quad \cdot (1+ 2\lambda \kappa\sub{AC} \power')^{-\n \rho_1/2} \notag \\ 
    &\quad \cdot \exp\Big(\mu_1 \Big(\|\rvecz\|^2 - \|\rvecz  +  g_1 c_1(\setW\sub{MD1}) + g_2 c_2(\setW\sub{MD2}) \notag \\
    &\qquad - g_1 c_1(\setW\sub{FP1}) - g_2 c_2(\setW\sub{FP2})\|^2\Big)\notag   \Big)
\end{align} 
where $\mu_1$ is defined by~\eqref{eq:mu1}. 
By taking the expectation over $\setW\sub{FP1}$ and $\setW\sub{FP2}$ using~\eqref{eq:quad_Gaussian} in Lemma~\ref{lem:chi2}, we obtain that, for given $c_\ell(\setW_{{\rm MD}\ell})$, $\ell \in \{1,2\}$, and given $\rvecz$,
\begin{align}
    &\Prob\Bigg[
    \bigcup_{\setW\sub{AC1} \subset \binom{[\Kaf] \setminus \setW\sub{MD1}}{t\sub{AC1}}} 
    \bigcup_{\setW\sub{AC2} \subset \binom{[\Kaf+1 : \Ka] \setminus \setW\sub{MD2}}{t\sub{AC1} + t\sub{MD1} - t\sub{FP1}}} \notag \\
    &\qquad F(\setW_{{\rm MD}\ell}, \setW_{{\rm FP}\ell}, \setW_{{\rm AC}\ell}, \ell\in \{1,2\}) \Bigg]
    \notag \\
    &\le \binom{\Kaf-t\sub{MD1}}{t\sub{AC1}}^{\rho_1} \binom{\Ka - \Kaf - t + t\sub{MD1}}{t\sub{AC1} + t\sub{MD1} - t\sub{FP1}}^{\rho_1} \notag \\ 
    &\quad \cdot \exp\Big(\lambda\rho_1 \|\rvecz\|^2 \!-\! \mu_1 \|\rvecz  +  g_1 c_1(\setW\sub{MD1}) + g_2 c_2(\setW\sub{MD2})\|^2 \notag \\
    &\qquad \cdot (1+ 2 \mu_1 (g_1^2 t\sub{FP1} + g_2^2t\sub{FP2})\power')^{-1}-\n a_1 \Big)
 \end{align}
 where 
 \begin{align}
     a_1 &= \frac{\rho_1}{2}
\ln(1+ 2\lambda\kappa\sub{AC}\power') \notag\\
&\quad +\frac{1}{2}\ln\big(1 + 2\mu_1 (g_1^2 t\sub{FP1} + g_2^2t\sub{FP2}) \power'\big).
\end{align}
Note that $g_1^2 t\sub{FP1} + g_2^2t\sub{FP2}$ is equal to $\kappa\sub{FP}$ defined in~\eqref{eq:kappaFP}.
Now we apply Gallager's $\rho$-trick again to obtain that, given~$c_\ell(\setW_{{\rm MD}\ell})$, $c_\ell(\setW_{{\rm FP}\ell})$, $\ell \in \{1,2\}$, given $\rvecz$ and for every $\rho_2 \in [0,1]$, 
\begin{align}
    &\Prob\Bigg[
    \bigcup_{\setW\sub{FP1} \subset \binom{[\Ka + 1:\M]}{t\sub{FP1}} } 
    \bigcup_{\setW\sub{FP2} \subset \binom{[\Ka + 1:\M] \setminus \setW\sub{FP1}}{t-t\sub{FP1}}} \notag \\ 
    &\qquad 
    \bigcup_{\setW\sub{AC1} \subset \binom{[\Kaf] \setminus \setW\sub{MD1}}{t\sub{AC1}}} 
    \bigcup_{\setW\sub{AC2} \subset \binom{[\Kaf+1 : \Ka] \setminus \setW\sub{MD2}}{t\sub{AC1} + t\sub{MD1} - t\sub{FP1}}} \notag \\
    &\qquad F(\setW_{{\rm MD}\ell}, \setW_{{\rm FP}\ell}, \setW_{{\rm AC}\ell}, \ell\in \{1,2\}) \Bigg] \\ 
    &\le \binom{M-\Ka}{t\sub{FP1}}^{\rho_2} \binom{M-\Ka -t\sub{FP1}}{t - t\sub{FP1}}^{\rho_2} \notag \\ 
    &\quad \cdot  \binom{\Kaf-t\sub{MD1}}{t\sub{AC1}}^{\rho_1 \rho_2} \binom{\Ka - \Kaf - t + t\sub{MD1}}{t\sub{AC1} + t\sub{MD1} - t\sub{FP1}}^{\rho_1 \rho_2} \notag \\ 
    &\quad \cdot \Exop_{c_1(\setW\sub{MD1}), c_2(\setW\sub{MD2})}\Big[\exp\Big(\lambda \rho_1 \rho_2 \|\rvecz\|^2 \notag \\
    &\qquad - \mu_2 \|\rvecz  +  g_1 c_1(\setW\sub{MD1}) + g_2 c_2(\setW\sub{MD2})\|^2  - \n \rho_2a_1\Big)\Big] \\
    &= \binom{M-\Ka}{t\sub{FP1}}^{\rho_2} \binom{M-\Ka -t\sub{FP1}}{t - t\sub{FP1}}^{\rho_2} \notag \\ 
    &\quad \cdot \binom{\Kaf-t\sub{MD1}}{t\sub{AC1}}^{\rho_1 \rho_2} \binom{\Ka - \Kaf - t + t\sub{MD1}}{t\sub{AC1} + t\sub{MD1} - t\sub{FP1}}^{\rho_1 \rho_2} \notag \\ 
    &\quad \cdot (1 + 2\mu_2(g_1^2 t\sub{MD1} + g_2^2 t\sub{MD2})\power')^{-\n/2} \notag \\ 
    &\quad \cdot \exp\Big(  \Big(\lambda \rho_1\rho_2 -\frac{\mu_2}{1+ 2\mu_2(g_1^2 t\sub{MD1} + g_2^2 t\sub{MD2})\power'}\Big) \|\rvecz\|^2 \notag \\
    &\qquad - N\rho_2 a_1\Big) \\ 
    &= \binom{M-\Ka}{t\sub{FP1}}^{\rho_2} \binom{M-\Ka -t\sub{FP1}}{t - t\sub{FP1}}^{\rho_2} \notag \\ 
    &\quad \cdot \binom{\Kaf-t\sub{MD1}}{t\sub{AC1}}^{\rho_1 \rho_2} \binom{\Ka - \Kaf - t + t\sub{MD1}}{t\sub{AC1} + t\sub{MD1} - t\sub{FP1}}^{\rho_1 \rho_2} \notag \\ 
    &\quad \cdot \exp( b\|\rvecz\|^2  - \n\rho_2 a_1 - \n a_2 ),
\end{align}
where we used~\eqref{eq:quad_Gaussian} in Lemma~\ref{lem:chi2} to obtain the first equality, and we define $\mu_2$ by~\eqref{eq:mu2} and
\begin{align}
   b &=\lambda \rho_1 \rho_2-\frac{\mu_2}{1+ 2\mu_2(g_1^2 t\sub{MD1} + g_2^2 t\sub{MD2})\power'},\\
   a_2 &=\frac{1}{2}\ln\big(1+ 2\mu_2(g_1^2 t\sub{MD1} + g_2^2 t\sub{MD2})\power'\big).
\end{align}
Note that $g_1^2 t\sub{MD1} + g_2^2 t\sub{MD2}$ is equal to $\kappa\sub{MD}$ defined in~\eqref{eq:kappaMD}. 
We now apply Gallager's $\rho$-trick the third time to obtain that, given $\rvecz$ and for every $\rho_3 \in [0,1]$,
\begin{align}
    &= \Prob\Bigg[\bigcup_{\setW\sub{MD1} \subset \binom{[\Kaf]}{t\sub{MD1}}}  
    \bigcup_{\setW\sub{MD2} \subset \binom{[\Kaf+1:\Ka]}{t- t\sub{MD1}}}  \notag \\ 
    &\qquad 
    \bigcup_{\setW\sub{FP1} \subset \binom{[\Ka + 1:\M]}{t\sub{FP1}} } 
    \bigcup_{\setW\sub{FP2} \subset \binom{[\Ka + 1:\M] \setminus \setW\sub{FP1}}{t-t\sub{FP1}}} \notag \\ 
    &\qquad 
    \bigcup_{\setW\sub{AC1} \subset \binom{[\Kaf] \setminus \setW\sub{MD1}}{t\sub{AC1}}} 
    \bigcup_{\setW\sub{AC2} \subset \binom{[\Kaf+1 : \Ka] \setminus \setW\sub{MD2}}{t\sub{AC1} + t\sub{MD1} - t\sub{FP1}}} \notag \\
    &\qquad F(\setW_{{\rm MD}\ell}, \setW_{{\rm FP}\ell}, \setW_{{\rm AC}\ell}, \ell\in \{1,2\}) \Bigg] \\ 
    &\le \binom{\Kaf}{t\sub{MD1}}^{\rho_3} \binom{\Kas}{t - t\sub{MD1}}^{\rho_3}  \notag \\
    &\quad \cdot \binom{M-\Ka}{t\sub{FP1}}^{\rho_2\rho_3} \binom{M-\Ka -t\sub{FP1}}{t - t\sub{FP1}}^{\rho_2\rho_3} \notag \\ 
    &\quad \cdot  \binom{\Kaf-t\sub{MD1}}{t\sub{AC1}}^{\rho_1 \rho_2\rho_3} \binom{\Ka - \Kaf - t + t\sub{MD1}}{t\sub{AC1} + t\sub{MD1} - t\sub{FP1}}^{\rho_1 \rho_2\rho_3} \notag \\ 
    &\quad \cdot \Exop_{\rvecz}[\exp( b \|\rvecz\|^2  - \n\rho_2 \rho_3 a) - \n \rho_3 a_1)] \\ 
    &= \binom{\Kaf}{t\sub{MD1}}^{\rho_3} \binom{\Kas}{t - t\sub{MD1}}^{\rho_3}  \notag \\
    &\quad \cdot \binom{M-\Ka}{t\sub{FP1}}^{\rho_2\rho_3} \binom{M-\Ka -t\sub{FP1}}{t - t\sub{FP1}}^{\rho_2\rho_3} \notag \\ 
    &\quad \cdot  \binom{\Kaf-t\sub{MD1}}{t\sub{AC1}}^{\rho_1 \rho_2\rho_3} \binom{\Ka - \Kaf - t + t\sub{MD1}}{t\sub{AC1} + t\sub{MD1} - t\sub{FP1}}^{\rho_1 \rho_2\rho_3} \notag \\ 
    &\quad \cdot (1 - 2 b \rho_3)^{-\n/2} \exp(-\n \rho_2\rho_3 a_1 - \n \rho_3 a_2), \label{eq:tmp785}
\end{align}
where in the last equality, we use~\eqref{eq:quad_Gaussian} in Lemma~\ref{lem:chi2} to compute the expectation over $\rvecz$. 

We conclude that $\Prob{|\setW\sub{MD}| = |\setW\sub{FP}| = t}$ is upper-bounded by the right-hand side of~\eqref{eq:tmp785}.
Finally, by substituting this bound into~\eqref{eq:bound_after_change_of_measure}, we complete the proof.

\section{Proof of Corollary~\ref{cor:bound_separate}} \label{proof:bound_separate}
When $\setC_1 = \setC_2$, we have that $c_1(\setW) = c_2(\setW)$ for a given message set $\setW$. The proof follows the same steps as in Appendix~\ref{proof:bound}, except that, in~\eqref{eq:tmp519}, the variance of elements of $g_1 c_1(\setW\sub{AC1}) - g_2 c_2(\setW\sub{AC1}) - g_1 c_1(\setW\sub{AC2}) + g_2 c_2(\setW\sub{AC2}) = (g_1 - g_2) (c(\setW\sub{AC1}) + c(\setW\sub{AC2}))$ is given by $\kappa\sub{AC} \power'$ with $\kappa\sub{AC}$ defined in~\eqref{eq:kappaAC_separate}.

\section{Proof of Theorem~\ref{th:bound_IC}} \label{proof:bound_IC}
By performing the same change of measure as in Appendix~\ref{proof:bound}, we obtain the bound~\eqref{eq:bound_after_change_of_measure}. We next focus on upper-bounding $\Prob{|\setW\sub{MD}| = |\setW\sub{FP}| = t}$ under the new measure and for the interference-cancellation decoder.  

\subsection{Pairwise Error Event}
For this decoder, the pairwise error event $\widetilde \setW \to \widehat \setW$ implies that
\begin{align}
    \|\rvecy - g_1 c(\widehat \setW_1)\| &\le \|\rvecy - g_1 c(\setW_1)\|, \label{eq:pairwise_IC_1} \\
    \|\rvecy \!-\! g_1 c(\widehat \setW_1) \!-\! g_2 c(\widehat \setW_2)\| &\le \|\rvecy \!-\! g_1 c(\widehat \setW_1) \!-\! g_2 c(\setW_2)\| \label{eq:pairwise_IC_2}
\end{align}
where we recall that $c(\setW) = \sum_{w\in\setW} \rvecc_{w}$.
Splitting message sets as in Appendix~\ref{proof:bound}, we express~\eqref{eq:pairwise_IC_1} and~\eqref{eq:pairwise_IC_2} as
\begin{align}
    &\|\underbrace{\rvecz + g_1 c(\setW\sub{MD1}) + g_2 c(\setW\sub{MD2}) - g_1 c(\setW\sub{FP1}) + g_2 c(\setW\sub{C2})}_{=\rveca} \notag \\
    &~ + g_1 c(\setW\sub{AC1}) + (g_2- g_1) c(\setW\sub{AC2})\| \notag \\
    &\quad \le \|\underbrace{\rvecz + g_2 c(\setW\sub{C2}) + g_2 c(\setW\sub{MD2})}_{=\rvecb} +  g_2 c(\setW\sub{AC2})\|, \label{eq:pairwise_IC_equiv_1} \\ 
    &\|\underbrace{\rvecz + g_1 c(\setW\sub{MD1}) + g_2 c(\setW\sub{MD2}) - g_1 c(\setW\sub{FP1}) - g_2 c(\setW\sub{FP2})}_{=\rvecc} \notag \\
    &~ + (g_1 - g_2) (c(\setW\sub{AC1}) - c(\setW\sub{AC2})) \| \notag \\
    &\quad \le \|\underbrace{\rvecz + g_1 c(\setW\sub{MD1})  - g_1 c(\setW\sub{FP1})}_{=\rvecd} \notag \\
    &\qquad + g_1 c(\setW\sub{AC1}) - g_1 c(\setW\sub{AC2})\|,  \label{eq:pairwise_IC_equiv_2}
\end{align}
where we also defined the vectors $\rveca$, $\rvecb$, $\rvecc$, $\rvecd$ to make the notation more compact in the following.
We denote by $F\sub{IC}(\setW_{{\rm MD}\ell}, \setW_{{\rm FP}\ell}, \setW_{{\rm AC}\ell}, \ell\in \{1,2\})$ the set of $(\setW_{{\rm MD}\ell}, \setW_{{\rm FP}\ell}, \setW_{{\rm AC}\ell}, \ell\in \{1,2\})$ such that~\eqref{eq:pairwise_IC_equiv_1} and~\eqref{eq:pairwise_IC_equiv_2} hold. Note that $\setW\sub{C2}$ is fully determined given $\widetilde\setW = [\Kaf + 1 : \Ka]$, $\setW\sub{MD2}$, and $\setW\sub{AC2}$.
By writing the event $|\setW\sub{MD}| = |\setW\sub{FP}| = t$ as a union of the pairwise error events, we obtain a similar bound as in~\eqref{eq:tmp555} with $F(\setW_{{\rm MD}\ell}, \setW_{{\rm FP}\ell}, \setW_{{\rm AC}\ell}, \ell\in \{1,2\})$ replaced by $F\sub{IC}(\setW_{{\rm MD}\ell}, \setW_{{\rm FP}\ell}, \setW_{{\rm AC}\ell}, \ell\in \{1,2\})$.

\subsection{Error Exponent Analysis}
We apply the Chernoff joint bound in Corollary~\ref{cor:Chernoff_joint} to bound the pairwise error probability as
\begin{align}
    &\Prob\lefto[F\sub{IC}(\setW_{{\rm MD}\ell}, \setW_{{\rm FP}\ell}, \setW_{{\rm AC}\ell}, \ell\in \{1,2\})\right] \notag \\
    &= \Prob\Big[\|\rveca + g_1 c(\setW\sub{AC1}) + (g_2- g_1) c(\setW\sub{AC2})\|^2 \notag \\
    &\qquad ~- \|\rvecb +  g_2 c(\setW\sub{AC2})\|^2 \le 0, \notag \\ 
    &\qquad \|\rvecc  + (g_1 - g_2) (c(\setW\sub{AC1}) - c(\setW\sub{AC2})) \|^2 \notag \\
    &\qquad~ - \|\rvecd + g_1 c(\setW\sub{AC1}) - g_1 c(\setW\sub{AC2})\|^2 \le 0 \Big] \\
    &\le \Exop\Big[\exp\Big(\!-\lambda_1(\|\rveca + g_1 c(\setW\sub{AC1}) + (g_2- g_1) c(\setW\sub{AC2})\|^2 \notag \\
    &\qquad \quad- \|\rvecb +  g_2 c(\setW\sub{AC2})\|^2) \notag \\ 
    &\qquad~ -\lambda_2(\|\rvecc  + (g_1 - g_2) (c(\setW\sub{AC1}) - c(\setW\sub{AC2})) \|^2 \notag \\
    &\qquad\quad - \|\rvecd + g_1 c(\setW\sub{AC1}) - g_1 c(\setW\sub{AC2})\|^2) \Big)\Big] \label{eq:tmp1370}
\end{align}

for all $\lambda_1 > 0$ and $\lambda_2 > 0$.  

\subsubsection{Expectation over $c(\setW\sub{AC1})$ and $c(\setW\sub{AC2})$ and the First Gallager-$\rho$ Trick}
The exponent in~\eqref{eq:tmp1370} can be written as  
\begin{multline}
    [\tp{c(\setW\sub{AC1})} \ \tp{c(\setW\sub{AC2})}] (\matU\sub{AC} \otimes \matidentity_\n) \begin{bmatrix}
        c(\setW\sub{AC1}) \\ c(\setW\sub{AC2})
    \end{bmatrix} \\
    + 2 \tp{\rvecv}\sub{AC} \begin{bmatrix}
        c(\setW\sub{AC1}) \\ c(\setW\sub{AC2})
    \end{bmatrix}
    + R\sub{AC}
\end{multline}
where 
\begin{align}
    \matU\sub{AC} &= \begin{bmatrix}
        u\sub{AC}\of{1} & u\sub{AC}\of{2} \\
        u\sub{AC}\of{2} & u\sub{AC}\of{3}
    \end{bmatrix}, \\
    u\sub{AC}\of{1} &= -\lambda_1 g_1^2 -\lambda_2 g_2(g_2 - 2 g_1), \\
    u\sub{AC}\of{2} &= \lambda_1 g_1 (g_1 - g_2) + \lambda_2 g_2 (g_2 - 2 g_1), \\
    u\sub{AC}\of{3} &= -\lambda_1 g_1 (g_1 - 2g_2) - \lambda_2 g_2 (g_2 - 2g_1), \\
    \rvecv\sub{AC} &= \!\begin{bmatrix}
        -\lambda_1 g_1 \rveca - \lambda_2 (g_1 - g_2) \rvecc + \lambda_2 g_1 \rvecd \\ 
        \lambda_1(g_1 \!-\! g_2) \rveca +\! \lambda_1 g_2 \rvecb +\! \lambda_2 (g_1 \!-\! g_2) \rvecc -\! \lambda_2 g_1 \rvecd
    \end{bmatrix}\!, \\
    R\sub{AC} &= -\lambda_1 (\|\rveca\|^2 - \|\rvecb\|^2) - \lambda_2(\|\rvecc\|^2 - \|\rvecd\|^2).
\end{align}
Note that $\begin{bmatrix}
        c(\setW\sub{AC1}) \\ c(\setW\sub{AC2})
    \end{bmatrix} \sim \normal(\veczero, \matSigma\sub{AC} \otimes \matidentity_{\n})$ with $\matSigma\sub{AC} = \diag(t\sub{AC1} \power', t\sub{AC2} \power')$.
We apply~\eqref{eq:quad_Gaussian_zeroMean} in Lemma~\ref{lem:chi2} to compute the expectation in~\eqref{eq:tmp1370} over $c(\setW\sub{AC1})$, $c(\setW\sub{AC2})$ as
\begin{align}
    &\det(\matidentity_2 - 2 \matSigma\sub{AC} \matU\sub{AC})^{-\n/2} \notag \\
            &\quad \cdot\exp\big(2 \tp{\rvecv}\sub{AC}[(\matSigma\sub{AC}^{-1} - 2 \matU\sub{AC})^{-1} \otimes \matidentity_\n] \rvecv\sub{AC}  + R\sub{AC}\big)
\end{align}
where we require $\lambda_1, \lambda_2$ to satisfy that $\matSigma\sub{AC}^{-1} \succ 2 \matU\sub{AC}$.

 Now, we apply Gallager's $\rho$-trick in Lemma~\ref{lem:Gallager} to get that,
 given $c(\setW_{{\rm MD}\ell})$, $c(\setW_{{\rm FP}\ell})$, $\ell \in \{1,2\}$, $c(\setW\sub{C2})$, and $\rvecz$, it holds for every $\rho_1 \in [0,1]$ that 
\begin{align}
    &\Prob\Bigg[
    \bigcup_{\setW\sub{AC1} \subset \binom{[\Kaf] \setminus \setW\sub{MD1}}{t\sub{AC1}}} 
    \bigcup_{\setW\sub{AC2} \subset \binom{[\Kaf+1 : \Ka] \setminus \setW\sub{MD2}}{t\sub{AC1} + t\sub{MD1} - t\sub{FP1}}} \notag \\
    &\qquad F\sub{IC}(\setW_{{\rm MD}\ell}, \setW_{{\rm FP}\ell}, \setW_{{\rm AC}\ell}, \ell\in \{1,2\}) \Bigg] \notag \\
    &\le \binom{\Kaf-t\sub{MD1}}{t\sub{AC1}}^{\rho_1} \binom{\Ka - \Kaf - t + t\sub{MD1}}{t\sub{AC1} + t\sub{MD1} - t\sub{FP1}}^{\rho_1}
    \notag \\ 
    &\quad \cdot \det(\matidentity_2 - 2 \matSigma\sub{AC} \matU\sub{AC})^{-\n\rho_1/2} \notag \\
            &\quad \cdot\exp\big(2 \rho_1 \tp{\rvecv}\sub{AC}[(\matSigma\sub{AC}^{-1} - 2 \matU\sub{AC})^{-1}  \otimes \matidentity_\n] \rvecv\sub{AC}  + \rho_1 R\sub{AC}  \big). \label{eq:tmp1548}
\end{align} 

\subsubsection{Expectation over $c(\setW\sub{FP1})$ and $c(\setW\sub{FP2})$ and the Second Gallager-$\rho$ Trick}
Next, we denote
\begin{align}
    \rvece &= \rvecz + g_1 c(\setW\sub{MD1}) + g_2 c(\setW\sub{MD2}) + g_2 c(\setW\sub{C2}), \\
    \rvecf &= \rvecz + g_1 c(\setW\sub{MD1}) + g_2 c(\setW\sub{MD2}), \\
    \rvecg &= \rvecz + g_1 c(\setW\sub{MD1}).
\end{align}
It follows that $\rveca = \rvece - g_1 c(\setW\sub{FP1})$, $\rvecc = \rvecf - g_1 c(\setW\sub{FP1}) - g_2 c(\setW\sub{FP2})$, and $\rvecd = \rvecg - g_1 c(\setW\sub{FP1})$. We can express the exponent in~\eqref{eq:tmp1548} as
\begin{align}
    &2 \rho_1 \tp{\rvecv}\sub{AC}[(\matSigma\sub{AC}^{-1} - 2 \matU\sub{AC})^{-1}  \otimes \matidentity_\n] \rvecv\sub{AC}  + \rho_1 R\sub{AC} \notag \\
    &= [\tp{c(\setW\sub{FP1})} \ \tp{c(\setW\sub{FP2})}] (\matU\sub{FP} \otimes \matidentity_\n) \begin{bmatrix}
        c(\setW\sub{FP1}) \\ c(\setW\sub{FP2})
    \end{bmatrix} \notag \\
    &\quad + 2 \tp{\rvecv}\sub{FP} \begin{bmatrix}
        c(\setW\sub{FP1}) \\ c(\setW\sub{FP2})
    \end{bmatrix}
    + R\sub{FP}
\end{align}
where
\begin{align}
    \matU\sub{FP} &= 2 \rho_1  \matLambda\sub{FP} \matQ\sub{FP} - \rho_1 \!\begin{bmatrix}
        \lambda_1 g_1^2 & \!\!\lambda_2g_1 g_2 \\ \lambda_2g_1 g_2\!\! & \lambda_2 g_2^2 
    \end{bmatrix}, \\
    \rvecv\sub{FP} &= 2\rho_1 [\matLambda\sub{FP} \otimes \matidentity_\n] \rvecq\sub{FP} \notag \\
    &\quad + \rho_1 \begin{bmatrix}
        \lambda_1 g_1 \rvece + \lambda_2 g_1 \rvecf -\lambda_2 g_1 \rvecg \\ \lambda_2 g_2 \rvecf,
    \end{bmatrix}, \\
    R\sub{FP} &= 2\rho_1 \tp{\rvecq}\sub{FP} \big[(\matSigma\sub{AC}^{-1} - 2 \matU\sub{AC})^{-1} \otimes \matidentity_\n\big] \rvecq\sub{FP} \notag \\
    &\quad - \rho_1 \lambda_1 (\|\rvece\|^2 - \|\rvecb\|^2)  - \rho_1 \lambda_2 (\|\rvecf\|^2 - \|\rvecg\|^2),
\end{align}
with
\begin{align}
    \matQ\sub{FP} &= \begin{bmatrix}
        \lambda_1 g_1^2 - \lambda_2 g_1 g_2 & \lambda_2 g_2(g_1 - g_2) \\
        -\lambda_1 g_1 (g_1 - g_2) + \lambda_2 g_1 g_2 & -\lambda_2 g_2 (g_1 - g_2)
    \end{bmatrix}\!, \\ 
    \rvecq\sub{FP} &= \begin{bmatrix}
        -\lambda_1 g_1 \rvece - \lambda_2(g_1 - g_2) \rvecf + \lambda_2 g_1 \rvecg \\
        \lambda_1 (g_1 \!-\! g_2) \rvece + \lambda_1 g_2 \rvecb + \lambda_2(g_1 \!-\! g_2) \rvecf - \lambda_2 g_1 \rvecg
    \end{bmatrix}\!, \\
    \matLambda\sub{FP} &= \tp{\matQ}\sub{FP} (\matSigma\sub{AC}^{-1} \!-\! 2 \matU\sub{AC})^{-1}.
\end{align}
Note that $\begin{bmatrix}
        c(\setW\sub{FP1}) \\ c(\setW\sub{FP2})
    \end{bmatrix} \sim \normal(\veczero, \matSigma\sub{FP} \otimes \matidentity_{\n})$ with $\matSigma\sub{FP} = \diag(t\sub{FP1} \power', t\sub{FP2} \power')$.
By applying~\eqref{eq:quad_Gaussian_zeroMean} in Lemma~\ref{lem:chi2} to compute the expectation of~\eqref{eq:tmp1548} over $c(\setW\sub{FP1}), c(\setW\sub{FP2})$, we obtain that, given~$c(\setW_{{\rm MD}\ell})$, $\ell \in \{1,2\}$, $c(\setW\sub{C2})$, and $\rvecz$, 
\begin{align}
    &\Prob\Bigg[
    \bigcup_{\setW\sub{AC1} \subset \binom{[\Kaf] \setminus \setW\sub{MD1}}{t\sub{AC1}}} 
    \bigcup_{\setW\sub{AC2} \subset \binom{[\Kaf+1 : \Ka] \setminus \setW\sub{MD2}}{t\sub{AC1} + t\sub{MD1} - t\sub{FP1}}} \notag \\
    &\qquad F\sub{IC}(\setW_{{\rm MD}\ell}, \setW_{{\rm FP}\ell}, \setW_{{\rm AC}\ell}, \ell\in \{1,2\}) \Bigg] \notag \\
    &\le \binom{\Kaf-t\sub{MD1}}{t\sub{AC1}}^{\rho_1} \binom{\Ka - \Kaf - t + t\sub{MD1}}{t\sub{AC1} + t\sub{MD1} - t\sub{FP1}}^{\rho_1}
    \notag \\ 
    &\quad \cdot \det(\matidentity_2 - 2 \matSigma\sub{AC} \matU\sub{AC})^{-\n\rho_1/2} \notag \\
            &\quad \cdot \det(\matidentity_2 - 2 \matSigma\sub{FP} \matU\sub{FP})^{-\n/2} \notag \\
            &\quad \cdot \exp\big(2 \tp{\rvecv}\sub{FP}[(\matSigma\sub{FP}^{-1} - 2  \matU\sub{FP})^{-1} \otimes \matidentity_\n] \rvecv\sub{FP}  + R\sub{FP}\big)
\end{align}
under the condition that $\matSigma\sub{FP}^{-1} \succ 2 \matU\sub{FP}$.

Now, we apply Gallager's $\rho$-trick again to get that, given~$c(\setW_{{\rm MD}\ell})$, $\ell \in \{1,2\}$, $c(\setW\sub{C2})$, and $\rvecz$, and for every $\rho_2 \in [0,1]$, it holds that
\begin{align}
    &\Prob\Bigg[
    \bigcup_{\setW\sub{FP1} \subset \binom{[\Ka + 1:\M]}{t\sub{FP1}} } 
    \bigcup_{\setW\sub{FP2} \subset \binom{[\Ka + 1:\M] \setminus \setW\sub{FP1}}{t-t\sub{FP1}}} \notag \\ 
    &\qquad 
    \bigcup_{\setW\sub{AC1} \subset \binom{[\Kaf] \setminus \setW\sub{MD1}}{t\sub{AC1}}} 
    \bigcup_{\setW\sub{AC2} \subset \binom{[\Kaf+1 : \Ka] \setminus \setW\sub{MD2}}{t\sub{AC1} + t\sub{MD1} - t\sub{FP1}}} \notag \\
    &\qquad F\sub{IC}(\setW_{{\rm MD}\ell}, \setW_{{\rm FP}\ell}, \setW_{{\rm AC}\ell}, \ell\in \{1,2\}) \Bigg] \\ 
    &\le \binom{M-\Ka}{t\sub{FP1}}^{\rho_2} \binom{M-\Ka -t\sub{FP1}}{t - t\sub{FP1}}^{\rho_2} \notag \\ 
    &\quad \cdot  \binom{\Kaf-t\sub{MD1}}{t\sub{AC1}}^{\rho_1 \rho_2} \binom{\Ka - \Kaf - t + t\sub{MD1}}{t\sub{AC1} + t\sub{MD1} - t\sub{FP1}}^{\rho_1 \rho_2} \notag \\ 
    &\quad \cdot \det(\matidentity_2 - 2 \matSigma\sub{AC} \matU\sub{AC})^{-\n\rho_1\rho_2/2} \notag \\
            &\quad \cdot \det(\matidentity_2 - 2 \matSigma\sub{FP} \matU\sub{FP})^{-\n\rho_2/2} \notag \\
            &\quad \cdot \exp\big(2\rho_2 \tp{\rvecv}\sub{FP}[(\matSigma\sub{FP}^{-1} - 2  \matU\sub{FP})^{-1} \otimes \matidentity_\n] \rvecv\sub{FP}  + \rho_2 R\sub{FP}\big). \label{eq:tmp1829}
\end{align}

\subsubsection{Expectation over $c(\setW\sub{MD1})$ and $c(\setW\sub{MD2})$ and the Third Gallager-$\rho$ Trick}

We express the exponent in~\eqref{eq:tmp1829} as
\begin{align}
    &2\rho_2 \tp{\rvecv}\sub{FP}[(\matSigma\sub{FP}^{-1} - 2  \matU\sub{FP})^{-1} \otimes \matidentity_\n] \rvecv\sub{FP}  + \rho_2 R\sub{FP} \notag \\
    &= [\tp{c(\setW\sub{MD1})} \ \tp{c(\setW\sub{MD2})}] (\matU\sub{MD} \otimes \matidentity_\n) \begin{bmatrix}
        c(\setW\sub{MD1}) \\ c(\setW\sub{MD2})
    \end{bmatrix} \\
    &\quad + 2 \tp{\rvecv}\sub{MD} \begin{bmatrix}
        c(\setW\sub{MD1}) \\ c(\setW\sub{MD2})
    \end{bmatrix}
    + R\sub{MD}
\end{align}
where
\begin{align}
    \matU\sub{MD} &= 2\rho_2 \matLambda\sub{MD} \matQ\sub{MD} + 2 \rho_2 \rho_1 \matLambda_{\matM} \matM \notag \\
    &\quad - \rho_2 \rho_1 \begin{bmatrix}
        \lambda_1 g_1^2 & (\lambda_1 + \lambda_2) g_1 g_2 \\
        (\lambda_1 + \lambda_2) g_1 g_2 & \lambda_2 g_2^2
    \end{bmatrix}, \\ 
    \matM &= \!\begin{bmatrix}
        -\lambda_1 g_1^2 + \lambda_2 g_1 g_2 & \!\!\!-\lambda_1 g_1 g_2 \!-\! \lambda_2g_2(g_1 \!-\! g_2) \\
        \lambda_1 g_1 (g_1 \!-\! g_2) -\! \lambda_2 g_1 g_2\!\!\! & \lambda_1 g_1 g_2 + \lambda_2 g_2 (g_1 \!-\! g_2)
    \end{bmatrix}\!, \\ 
    \matQ\sub{MD} &= 2\rho_1 \matLambda\sub{FP} \matM + \rho_1 \begin{bmatrix}
        \lambda_1 g_1^2 & (\lambda_1 + \lambda_2) g_1 g_2 \\
        \lambda_2 g_1 g_2 & \lambda_2 g_2^2
    \end{bmatrix}, \\ 
    \rvecv\sub{MD} &=  2\rho_2 (\matLambda\sub{MD} \otimes \matidentity_\n) \rvecq\sub{MD} + 2 \rho_2 \rho_1 (\matLambda_{\matM} \otimes \matidentity_\n) \rvecm \notag \\ 
    &\quad - \rho_2 \rho_1 \rvecr, \\ 
    \rvecq\sub{MD} &= 2\rho_1 (\matLambda\sub{FP} \otimes \matidentity_\n) \rvecm + \rho_1 \rvecr,    \\
    \rvecm &= \bigg(\begin{bmatrix}
        1 & -1  \\ -1 & 1 
    \end{bmatrix}  \otimes \matidentity_\n \bigg)  \rvecr,
    \\ 
    \rvecr &= \bigg(\begin{bmatrix}
        \lambda_1 g_1 & \lambda_1 g_1 g_2 \\
        \lambda_2 g_2 & 0
    \end{bmatrix}  \otimes \matidentity_\n \bigg)  \begin{bmatrix}
        \rvecz \\ c(\setW\sub{C2})
    \end{bmatrix}, \\ %
    R\sub{MD} &= 2 \rho_2 \tp{\rvecq}\sub{MD} [(\matSigma\sub{FP}^{-1} -2 \matU\sub{FP})^{-1}] \rvecq\sub{MD} \notag \\
    &\quad + 4 \rho_1 \rho_2 \tp{\rvecm}(\matSigma\sub{AC}^{-1} - 2\matU\sub{AC})^{-1} \rvecm,
\end{align}
with 
\begin{align}
    \matLambda\sub{MD} &=
    \tp{\matQ}\sub{MD} (\matSigma\sub{FP}^{-1} - 2\matU\sub{FP})^{-1}, \\ 
    \matLambda_{\matM} &= \tp{\matM} (\matSigma\sub{AC}^{-1} - 2\matU\sub{AC})^{-1}. 
\end{align}
Note that $\begin{bmatrix}
        c(\setW\sub{MD1}) \\ c(\setW\sub{MD2})
    \end{bmatrix} \sim \normal(\veczero, \matSigma\sub{MD} \otimes \matidentity_{\n})$ with $\matSigma\sub{MD} = \diag(t\sub{MD1} \power', t\sub{MD2} \power')$.
By applying~\eqref{eq:quad_Gaussian_zeroMean} in Lemma~\ref{lem:chi2} to compute the expectation of~\eqref{eq:tmp1829} over $c(\setW\sub{MD1}), c(\setW\sub{MD2})$, we obtain that, given~$c(\setW\sub{C2})$ and $\rvecz$, 
\begin{align}
    &\Prob\Bigg[\bigcup_{\setW\sub{FP1} \subset \binom{[\Ka + 1:\M]}{t\sub{FP1}} } 
    \bigcup_{\setW\sub{FP2} \subset \binom{[\Ka + 1:\M] \setminus \setW\sub{FP1}}{t-t\sub{FP1}}} \notag \\ 
    &\qquad 
    \bigcup_{\setW\sub{AC1} \subset \binom{[\Kaf] \setminus \setW\sub{MD1}}{t\sub{AC1}}} 
    \bigcup_{\setW\sub{AC2} \subset \binom{[\Kaf+1 : \Ka] \setminus \setW\sub{MD2}}{t\sub{AC1} + t\sub{MD1} - t\sub{FP1}}} \notag \\
    &\qquad F\sub{IC}(\setW_{{\rm MD}\ell}, \setW_{{\rm FP}\ell}, \setW_{{\rm AC}\ell}, \ell\in \{1,2\}) \Bigg] \notag \\
    &\le \binom{\Kaf-t\sub{MD1}}{t\sub{AC1}}^{\rho_1} \binom{\Ka - \Kaf - t + t\sub{MD1}}{t\sub{AC1} + t\sub{MD1} - t\sub{FP1}}^{\rho_1}
    \notag \\ 
    &\quad \cdot \det(\matidentity_2 - 2 \matSigma\sub{AC} \matU\sub{AC})^{-\n\rho_1\rho_2/2} \notag \\
    &\quad \cdot \det(\matidentity_2 - 2 \matSigma\sub{FP} \matU\sub{FP})^{-\n\rho_2/2} \notag \\
    &\quad \cdot \det(\matidentity_2 - 2 \matSigma\sub{MD} \matU\sub{MD})^{-\n/2} \notag \\
    &\quad \cdot \exp\big(2 \tp{\rvecv}\sub{MD}[(\matSigma\sub{MD}^{-1} - 2  \matU\sub{MD})^{-1} \otimes \matidentity_\n] \rvecv\sub{MD}  + R\sub{MD}\big)
\end{align}
under the condition that $\matSigma\sub{FP}^{-1} \succ 2 \matU\sub{FP}$. 

We now apply Gallager's $\rho$-trick the third time to obtain that, given $c(\setW\sub{C2})$ and $\rvecz$, and for every $\rho_3 \in [0,1]$,
\begin{align}
    &= \Prob\Bigg[\bigcup_{\setW\sub{MD1} \subset \binom{[\Kaf]}{t\sub{MD1}}}  
    \bigcup_{\setW\sub{MD2} \subset \binom{[\Kaf+1:\Ka]}{t- t\sub{MD1}}}  \notag \\ 
    &\qquad 
    \bigcup_{\setW\sub{FP1} \subset \binom{[\Ka + 1:\M]}{t\sub{FP1}} } 
    \bigcup_{\setW\sub{FP2} \subset \binom{[\Ka + 1:\M] \setminus \setW\sub{FP1}}{t-t\sub{FP1}}} \notag \\ 
    &\qquad 
    \bigcup_{\setW\sub{AC1} \subset \binom{[\Kaf] \setminus \setW\sub{MD1}}{t\sub{AC1}}} 
    \bigcup_{\setW\sub{AC2} \subset \binom{[\Kaf+1 : \Ka] \setminus \setW\sub{MD2}}{t\sub{AC1} + t\sub{MD1} - t\sub{FP1}}} \notag \\
    &\qquad F\sub{IC}(\setW_{{\rm MD}\ell}, \setW_{{\rm FP}\ell}, \setW_{{\rm AC}\ell}, \ell\in \{1,2\}) \Bigg] \\ 
    &\le \binom{\Kaf}{t\sub{MD1}}^{\rho_3} \binom{\Kas}{t - t\sub{MD1}}^{\rho_3}  \notag \\
    &\quad \cdot \binom{M-\Ka}{t\sub{FP1}}^{\rho_2\rho_3} \binom{M-\Ka -t\sub{FP1}}{t - t\sub{FP1}}^{\rho_2\rho_3} \notag \\ 
    &\quad \cdot  \binom{\Kaf-t\sub{MD1}}{t\sub{AC1}}^{\rho_1 \rho_2\rho_3} \binom{\Ka - \Kaf - t + t\sub{MD1}}{t\sub{AC1} + t\sub{MD1} - t\sub{FP1}}^{\rho_1 \rho_2\rho_3} \notag \\ 
    &\quad \cdot \det(\matidentity_2 - 2 \matSigma\sub{AC} \matU\sub{AC})^{-\n\rho_1\rho_2\rho_3/2} \notag \\
    &\quad \cdot \det(\matidentity_2 - 2 \matSigma\sub{FP} \matU\sub{FP})^{-\n\rho_2\rho_3/2} \notag \\
    &\quad \cdot \det(\matidentity_2 - 2 \matSigma\sub{MD} \matU\sub{MD})^{-\n\rho_3/2} \notag \\
    &\quad \cdot \exp\big(2 \rho_3\tp{\rvecv}\sub{MD}[(\matSigma\sub{MD}^{-1} - 2  \matU\sub{MD})^{-1} \otimes \matidentity_\n] \rvecv\sub{MD}  + \rho_3 R\sub{MD}\big). \label{eq:tmp1992}
\end{align}

\subsubsection{Expectation over $\rvecz$ and $c(\setW\sub{C2})$}
We express the exponent in~\eqref{eq:tmp1992} as
\begin{align}
    &2 \rho_3\tp{\rvecv}\sub{MD}[(\matSigma\sub{MD}^{-1} - 2  \matU\sub{MD})^{-1} \otimes \matidentity_\n] \rvecv\sub{MD}  + \rho_3 R\sub{MD} \notag \\
    &= [\tp{\rvecz} \ \tp{c(\setW\sub{C2})}] (\matU\sub{ZC} \otimes \matidentity_\n) \begin{bmatrix}
        \rvecz \\ c(\setW\sub{C2})
    \end{bmatrix}
\end{align}
where
\begin{align}
    \matU\sub{ZC} &= 2\rho_3 \tp{\matQ}\sub{ZC}(\matSigma\sub{MD}^{-1} - 2  \matU\sub{MD})^{-1} \matQ\sub{ZC} \notag \\
    &\quad + 2 \rho_3 \rho_2 \tp{\matLambda}\sub{ZCa} (\matSigma\sub{FP}^{-1} - 2\matU\sub{FP})^{-1} \matLambda\sub{ZCa} \notag \\
    &\quad + 4 \rho_3 \rho_2 \rho_1 \tp{\matLambda}\sub{ZCb}  (\matSigma\sub{AC}^{-1} - 2\matU\sub{AC})^{-1} \matLambda\sub{ZCb}
\end{align}
with
\begin{align}
    \matQ\sub{ZC} &= \rho_2 \rho_1  \bigg(-\matidentity_2 + (4\matLambda\sub{MD} \matLambda\sub{FP} + 2 \matLambda_{\matM}) \begin{bmatrix}
        -1\!\! & 1 \\ 1 & \!- 1
    \end{bmatrix} + 2 \matLambda\sub{MD} \bigg) \notag \\&\quad \cdot \begin{bmatrix}
        \lambda_1 g_1 & \lambda_1 g_1 g_2 \\
        \lambda_2 g_2 & 0
    \end{bmatrix}, \\ 
    \matLambda\sub{ZCa} &= \rho_1 \bigg(\matidentity_2 + 2 \matLambda\sub{FP} \begin{bmatrix}
        -1 & 1 \\ 1 & - 1
    \end{bmatrix}\bigg)  \begin{bmatrix}
        \lambda_1 g_1 & \lambda_1 g_1 g_2 \\
        \lambda_2 g_2 & 0
    \end{bmatrix}\!, \\
    \matLambda\sub{ZCb} &= \begin{bmatrix}
        -1 & 1 \\ 1 & - 1
    \end{bmatrix}  \begin{bmatrix}
        \lambda_1 g_1 & \lambda_1 g_1 g_2 \\
        \lambda_2 g_2 & 0
    \end{bmatrix}.
\end{align}

Note that $\begin{bmatrix}
        \rvecz \\ c(\setW\sub{C2})
    \end{bmatrix} \sim \normal(\veczero, \matSigma\sub{ZC} \otimes \matidentity_{\n})$ with $\matSigma\sub{ZC} = \diag(1, (\Kas -t\sub{MD2} - t\sub{AC2}) \power')$.
By applying~\eqref{eq:quad_Gaussian_zeroMean} in Lemma~\ref{lem:chi2} to compute the expectation of~\eqref{eq:tmp1829} over $\rvecz$ and $c(\setW\sub{C2})$, we obtain that
\begin{align}
    &= \Prob\Bigg[\bigcup_{\setW\sub{MD1} \subset \binom{[\Kaf]}{t\sub{MD1}}}  
    \bigcup_{\setW\sub{MD2} \subset \binom{[\Kaf+1:\Ka]}{t- t\sub{MD1}}}  \notag \\ 
    &\qquad 
    \bigcup_{\setW\sub{FP1} \subset \binom{[\Ka + 1:\M]}{t\sub{FP1}} } 
    \bigcup_{\setW\sub{FP2} \subset \binom{[\Ka + 1:\M] \setminus \setW\sub{FP1}}{t-t\sub{FP1}}} \notag \\ 
    &\qquad 
    \bigcup_{\setW\sub{AC1} \subset \binom{[\Kaf] \setminus \setW\sub{MD1}}{t\sub{AC1}}} 
    \bigcup_{\setW\sub{AC2} \subset \binom{[\Kaf+1 : \Ka] \setminus \setW\sub{MD2}}{t\sub{AC1} + t\sub{MD1} - t\sub{FP1}}} \notag \\
    &\qquad F\sub{IC}(\setW_{{\rm MD}\ell}, \setW_{{\rm FP}\ell}, \setW_{{\rm AC}\ell}, \ell\in \{1,2\}) \Bigg] \\ 
    &\le \binom{\Kaf}{t\sub{MD1}}^{\rho_3} \binom{\Kas}{t - t\sub{MD1}}^{\rho_3}  \notag \\
    &\quad \cdot \binom{M-\Ka}{t\sub{FP1}}^{\rho_2\rho_3} \binom{M-\Ka -t\sub{FP1}}{t - t\sub{FP1}}^{\rho_2\rho_3} \notag \\ 
    &\quad \cdot  \binom{\Kaf-t\sub{MD1}}{t\sub{AC1}}^{\rho_1 \rho_2\rho_3} \binom{\Ka - \Kaf - t + t\sub{MD1}}{t\sub{AC1} + t\sub{MD1} - t\sub{FP1}}^{\rho_1 \rho_2\rho_3} \notag \\ 
    &\quad \cdot \det(\matidentity_2 - 2 \matSigma\sub{AC} \matU\sub{AC})^{-\n\rho_1\rho_2\rho_3/2} \notag \\
    &\quad \cdot \det(\matidentity_2 - 2 \matSigma\sub{FP} \matU\sub{FP})^{-\n\rho_2\rho_3/2} \notag \\
    &\quad \cdot \det(\matidentity_2 - 2 \matSigma\sub{MD} \matU\sub{MD})^{-\n\rho_3/2} \notag \\
    &\quad \cdot \det(\matidentity_2 - 2 \matSigma\sub{ZC} \matU\sub{ZC})^{-\n/2}, \label{eq:tmp2306}
\end{align}
which is an upper-bound for $\Prob{|\setW\sub{MD}| = |\setW\sub{FP}| = t}$.
Finally, by substituting this bound into~\eqref{eq:bound_after_change_of_measure} and rearranging the terms, we complete the proof.

\end{appendices}
}
\end{document}